\newcommand{\nc}{\newcommand}
\nc{\cO}{{\mathcal O}}
\nc{\cD}{{\mathcal D}}
\nc{\cR}{{\mathcal R}} 
\nc{\cM}{{\mathcal M}}
\nc{\cV}{{\mathcal V}}
\nc{\cP}{{\mathcal P}}
\nc{\cC}{{\mathcal C}}
\nc{\cS}{{\mathcal S}}
\nc{\cVbot}{{\mathcal V}_{\bot}}
\nc{\nat}{\Bbb{N}}
\nc{\delim}{\text{Delim}}
\nc{\lab}{\text{lab}}
\nc{\true}{\textsf{true}}
\nc{\false}{\textsf{false}}
\nc{\wildcard}{\bullet}
\nc{\dimtab}{\text{dim}}
\nc{\nfa}{\text{NFA}\xspace}
\nc{\nfas}{\text{NFAs}\xspace}
\nc{\dfa}{\text{DFA}\xspace}
\nc{\dfas}{\text{DFAs}\xspace}
\nc{\re}{\text{RE}\xspace}
\nc{\res}{\text{REs}\xspace}
\renewcommand{\epsilon}{\varepsilon}
\newcommand{\sem}[1]{\llbracket #1 \rrbracket}
\nc{\numbercells}[2]{\ensuremath{|\cells{#1}{#2}|}}
\nc{\cells}[2]{\ensuremath{\sem{#1}_{#2}}}
\nc{\tdsl}{{\sc Sculpt}\xspace}
\nc{\chisel}{\tdsl}
\nc{\corechisel}{core-\chisel}
\nc{\TDSL}{\tdsl}
\nc{\tdsltitle}{SCULPT}
\nc{\expspace}{EXPSPACE\xspace}  
\nc{\exptime}{EXPTIME\xspace} 
\nc{\pspace}{PSPACE\xspace} 
\nc{\np}{NP\xspace} 
\nc{\conp}{coNP\xspace} 
\nc{\ptime}{PTIME\xspace} 
\nc{\fptime}{FPTIME\xspace} 
\nc{\nlogspace}{NLOGSPACE\xspace} 
\nc{\sharpp}{\#P\xspace}
\nc{\sharppspace}{\#PSPACE\xspace}
\nc{\sharpexpspace}{\#EXPSPACE\xspace}
\nc{\da}{\ensuremath{\downarrow}}
\nc{\ua}{\ensuremath{\uparrow}}
\nc{\la}{\ensuremath{\leftarrow}}
\nc{\ra}{\ensuremath{\rightarrow}}
\nc{\up}{\textsf{up}\xspace}
\nc{\down}{\textsf{down}\xspace}
\renewcommand{\left}{\textsf{left}\xspace}
\renewcommand{\right}{\textsf{right}\xspace}
\nc{\col}{\textsf{col}\xspace}
\nc{\row}{\textsf{row}\xspace}
\nc{\unique}{\textsf{unique}\xspace}
\nc{\uniqueperrow}{\textsf{unique-per-row}\xspace}
\newcommand{\alphabet}{\ensuremath{\Sigma}}
\newcommand{\tokens}{\ensuremath{\Delta}}
\DeclareMathOperator{\lang}{\ensuremath{\mathcal{L}}}
\DeclareMathOperator{\tokendef}{\ensuremath{\Theta}}
\DeclareMathOperator{\coords}{coords}
\newcommand{\regionmodels}{\models_{\text{region}}}
\newcommand{\ignore}[1]{}
\newtheorem{theorem}{Theorem}[section]
\newtheorem{proposition}[theorem]{Proposition}
\newtheorem{definition}[theorem]{Definition}
\newtheorem{example}[theorem]{Example}
\newtheorem{lemma}[theorem]{Lemma}
\newtheorem{remark}[theorem]{Remark}
\nc{\rootcoord}{\textsf{root}}
\newcommand{\newrow}{\ensuremath{\langle\textsf{new row}\rangle}}
\DeclareMathOperator{\level}{lvl}
\begin{document}

\title{\tdsltitle: a Schema Language for Tabular Data on the Web}

\numberofauthors{3}                     \author{
\alignauthor
Wim Martens\\
       \affaddr{Universit\"at Bayreuth}
       \email{\small{wim.martens@uni-bayreuth.de}}
\alignauthor
Frank Neven\\
       \affaddr{Hasselt University and transnational University of
         Limburg}
       \email{\small{frank.neven@uhasselt.be}}
\alignauthor
Stijn Vansummeren\\
       \affaddr{Universit\'e Libre de Bruxelles}
       \email{\small{stijn.vansummeren@ulb.ac.be}}
}

\additionalauthors{}

\maketitle

\begin{abstract}
Inspired by the recent working effort towards a recommendation by the 
World Wide Web Consortium (W3C) for tabular data and metadata on the Web,
we present in this paper a concept for a schema language for tabular
web data called \tdsl. The language consists of rules constraining and defining the structure of regions in the table. These regions are defined through the novel formalism of region selection expressions. We present a formal model for \tdsl and obtain a linear time combined complexity evaluation algorithm. In addition, we consider weak and strong streaming evaluation for \tdsl and present a \tdsl fragment for each of these streaming variants. 
Finally, we discuss several extensions of \tdsl including alternative semantics, types, complex content, and explore region selection expressions as a basis for a transformation language.
\end{abstract}

\makeatletter{}\section{Introduction}

Despite the availability of numerous standardized formats for
semi-structured and semantic web data such as XML, RDF, and JSON, a
very large percentage of data and open data published on the web,
remains tabular in nature.\footnote{\footnotesize Jeni Tennison, one
  of the two co-chairs of the W3C CSV on the Web working group claims
  that ``over 90\% of the data published on data.gov.uk is tabular
  data''~\cite{Jeni90}.}  Tabular data is most commonly published in
the form of comma separated values (CSV) files because such files are open and therefore processable by numerous
tools, and tailored for all sizes of files ranging from a number of
KBs to several TBs. Despite these advantages, working with CSV files
is often cumbersome because they are typically not accompanied by a
\emph{schema} that describes the file's structure (i.e., ``the second
column is of integer datatype'', ``columns are delimited by tabs'',
\dots) and captures its intended meaning. Such a description is
nevertheless vital for any user trying to interpret the file and
execute queries or make changes to it. In other data models, the
presence of a schema is also important for query optimization
(required for scalable query execution if the file is large), as well
as other static analysis tasks. Finally, we strongly believe that
schemas are a prerequisite for unlocking huge
amounts of tabular data to the Semantic Web. Indeed, unless we have a
satisfactory way of describing the structure of tabular data we cannot
specify how its contents should be interpreted as RDF. Drawing
parallels with relational databases, observe that R2RML
mappings~\cite{r2rml-spec} (the W3C standard for mapping relational
databases to RDF) inherently need to refer to the schema (structure)
of the relational database in order to specify how database tuples can
be represented as RDF. 

In recognition of this problem, the \emph{CSV on the Web} Working
Group of the World Wide Web Consortium \cite{w3cchartercsv} argues for
the introduction of a schema language for tabular data to ensure
higher interoperability when working with datasets using the CSV or
similar formats. In particular, their charter states
\cite{w3cchartercsv}: 
\begin{quote} \it Whether converted to other formats or not, there is
  a need to describe the content of the CSV file: its structure,
  datatypes used in a specific column, language used for text fields,
  access rights, provenance, etc. This means that metadata should be
  available for the dataset, relying on standard vocabulary terms, and
  giving the necessary information for applications. The metadata can
  also be used for the conversion of the CSV content to other formats
  like RDF or JSON, it can enable automated loading of the data as
  objects, or it can provide additional information that search
  engines may use to gain a better understanding of the content of the
  data.
\end{quote}
In the present paper, we introduce \tdsl as a concept
for such a schema language for tabular data.\footnote{The name \tdsl for the language is in honour of Michelangelo, who
allegedly said \emph{``Every block of stone has a statue inside it and it is
the task of the sculptor to discover it.''} Readers who like acronyms
can read \tdsl as \textbf{SC}hema for \textbf{U}n-\textbf{L}ocking and
\textbf{P}rocessing \textbf{T}abular data.}

The critical reader may wonder whether designing such a schema
language isn't trivial. After all, doesn't it suffice to be able to
specify, for each column, the column's name and the type of data
allowed in its cells---similar to how relational database schemas
are defined using the SQL data definition language? The answer is
no. The reason is that there is a lot of variation in the tabular data
available on the web and that there are examples abound of tabular
data whose structure cannot be described by simple rules of the form
``column $x$ has datatype $y$''. Figures~\ref{ex:1:csv},
\ref{ex:2:csv}, and \ref{fig:PDB}, for example, show some tabular data
sets drawn from the Use Cases and Requirements document drafted by the
W3C CSV on the Web working group~\cite{w3c-csv-usecases}. Notice how,
in contrast to ``standard'' CSV files, Figure~\ref{ex:1:csv} has a
header consisting of multiple lines. This causes the data in the first
column to be non-uniform. Further notice how the \texttt{provenance}
data in the Figure~\ref{ex:2:csv} is spread among multiple
columns. Finally, notice how the shape of the rows in
Figure~\ref{fig:PDB} depends on the label in the first column of the
column: \texttt{TITLE} rows have different structure than
\texttt{AUTHOR} rows, which have a different structure than
\texttt{ATOM} rows, and so on. 

\tdsl schemas use the following idea to
describe the structure of these tables.
At their core, \tdsl schemas consist of rules of the form $\varphi \to
\rho$. Here, $\varphi$ selects a \emph{region} in the input table
(i.e., a subset of the table's cells) and $\rho$ constrains the
allowed structure and content of this region. A table is valid with
respect to a \tdsl schema if, for each rule in the schema, the region
selected by $\varphi$ satisfies the content constraints specified by
$\rho$.  It is important to note that \tdsl's expressive power goes
well beyond that of classical relational database schemas since 
\tdsl's region selectors are not limited to selecting columns. In
particular, the language that we propose for selecting regions is
capable of navigating through a table's cells bears much resemblance
to the way XPath~\cite{xpath2} navigates through the nodes of
an XML tree. For tokenizing the content of single cells, we draw inspiration from
XML Schema simple types (\cite{xsd-0}, Section
2.2).  Both features combined will allow us to express the use
cases of the W3C CSV on the Web Working Group.

We note that the W3C is also working on a schema language for tabular
data~\cite{w3c-metadata-vocabulary}. At the moment, however, that schema
language focuses on orthogonal issues like describing, for instance,
\emph{datatypes} and \emph{parsing cells}. Also, it only provides facilities
for the selection of \emph{columns}, and is hence not able to
express the schema of the more advanced use cases. \tdsl, in contrast,
draws inspiration from well-established theoretical tools from logic
and formal languages, which adds to the robustness of our approach.
Due to the above mentioned orthogonality we expect that it is not
difficult to integrate ideas from this paper in the W3C proposal.

\smallskip \noindent
In summary, we make the following contributions.
\begin{compactenum}[1.]
\item We illustrate the power of \chisel, and its suitability as a
  schema language for tabular data on the web, by expressing several
  use cases drafted by the CSV on the Web W3C working
  group~\cite{w3c-csv-usecases}. (Section~\ref{sec:chisel:examples}) 
\item We provide a formal model for the core of \chisel. A key
  contribution in this respect is the introduction of the region
  selector language.  (Section~\ref{sec:formalmodel}) 
\item We show that, despite its rather attractive expressiveness,
  tables can be efficiently validated w.r.t.\ \chisel schemas. In
  particular, when the table is small enough to be materialized in
  main memory, we show that validation can be done in linear time
  combined complexity (Section~\ref{sec:evaluation:linear}). For scenarios
  where materialization in main memory is not possible, we consider
  the scenario of streaming (i.e., incremental) validation. We formally
  introduce two versions of streaming validation: \emph{weak
    streamability} and \emph{strong streamability}. (Their differences
  are described in detail in Section~\ref{sec:streaming-evaluation}.)
  We show in particular that the fragment of \corechisel where region
  selectors can only look ``forward'' and never ``backward'' in the CSV
  file is \emph{weakly streamable}. If we further restrict region
  selectors to be both forward-looking and \emph{guarded} (a notion
  formalized in Section~\ref{sec:streaming-evaluation}) validation
  becomes \emph{strongly streamable}. All of the W3C Working group use
  cases considered here can be expressed using forward and guarded
  region selectors, hence illustrating the practical usefulness of
  this fragment.
\item While our focus in this paper is on introducing \chisel as a
  means for specifying the structure of CSV files and related formats,
  we strongly believe that region selector expressions are a
  fundamental component in developing other features mentioned in the
  charter of the W3C CSV on the Web Working Group, such as a CSV
  transformation language (for converting tabular data into other
  formats such as RDF or JSON), the specification of the language used
  for text fields; access rights; provenance; etc. While a full
  specification of these features is out of this paper's scope, we
  illustrate by means of example how \chisel could be extended to
  incorporate them.  (Section~\ref{sec:extensions})
\end{compactenum}

\smallskip
\noindent {\bf Note.} Due to space restrictions, proofs of formal statements are only
sketched. Proofs are provided in the Appendix.

\smallskip
\noindent {\bf Related Work.} The present paper fits in the line of
research, historically often published in the WWW conference, that
aims to formalize and study the properties of various W3C working
group drafts and standards (including XML
Schema~\cite{BexGNV-www08,DBLP:conf/www/BexMNS05},
SPARQL~\cite{DBLP:conf/www/ArenasCP12,DBLP:conf/pods/LosemannM12,DBLP:journals/tods/PerezAG09},
and RDF~\cite{DBLP:conf/www/RymanHS13,shape-expr}) with the aim of
providing feedback and input to the working group's activities. 

Given the numerous benefits of schemas for data processing, there is a
large body of work on the development, expressiveness, and properties
of schema languages for virtually all data models, including the
relational data model,
XML~\cite{BexGNV-www08,DBLP:conf/www/BexMNS05,DBLP:journals/pvldb/MartensNNS12,MartensNSB-tods06,DBLP:journals/jcss/GeladeN11},
and, more recently,
RDF~\cite{DBLP:conf/www/RymanHS13,shape-expr}. \tdsl differs from the
schema languages considered for XML and RDF in that it is specifically
designed for tabular data, not tree-structured or graph-structured
data. Nevertheless, the rule-based nature of \tdsl draws inspiration
from our prior work on rule-based and pattern-based schema languages
for
XML~\cite{DBLP:journals/pvldb/MartensNNS12,MartensNSB-tods06,DBLP:journals/jcss/GeladeN11}.

As already mentioned, while traditional relational database schemas
(formulated in e.g., the SQL data definition language) are
specifically designed for tabular data, they are strictly less
expressive than \tdsl schemas in the sense that relational schemas
limit region selection expressions to those that select columns
only. A similar remark holds for other recent proposals of CSV
schema languages, including the CSV Schema language proposed by the UK
National Archives~\cite{csv-schema-national-archives}, and Tabular
Data Package~\cite{tabular-data-package}. The remark also applies to
the part of Google's Dataset Publishing Language
(DSPL)~\cite{google-dspl} describing the contents of CSV files. In
contrast, DSPL also has features to relate data from multiple CSV
files, which \tdsl does not yet have.

The problem of streaming schema validation has been investigated in
the XML context for DTDs and XML schemas \cite{SegoufinV-pods02,
  SegoufinS-icdt07, MartensNSB-tods06,DBLP:conf/www/KumarMV07}. In
this work, the focus is on finding algorithms that can validate an XML
document in a single pass using constant memory or, if this is not
possible, a memory that is bounded by the depth of the document. Our
notion of streaming, in contrast, is one where we can use a memory
that is not constant but at most logarithmic in the size of the table
(for strong streaming), or at most linear in the number of columns and
logarithmic in the number of rows (for weak streaming). This allows us
to restrict memory when going from one row to the next and is essential to be able to
navigate downwards in \tdsl region selection expressions.

While streaming validation is undoubtedly an important topic for all of
the CSV schema languages mentioned
above~\cite{csv-schema-national-archives,tabular-data-package,google-dspl}
(the National Archives Schema Language mentions it as an explicit
design goal), no formal streaming validation algorithm has been
proposed for them, to the best of our knowledge.

\makeatletter{}\section{\tdsltitle\ By Example}
\label{sec:chisel:examples}
\label{sec:examples}

In this section, we introduce \tdsl through a number of examples. The
formal semantics of the examples is defined in
Section~\ref{sec:formalmodel}. The syntax we use here is tuned for
making the examples accessible to readers and is, of course, flexible.

 \tdsl schemas operate on \emph{tabular documents}, which are
text files describing tabular data. \tdsl schemas consist of two parts
(cf.\ Figure~\ref{fig:climate-schema}). The first part, \emph{parsing
  information}, defines the row and column delimiters and further
describes how words should be tokenized. This allows to parse the text
file and build a table-like structure consisting of rows and
columns. In this section we allow some rows to have fewer columns than
others but we require them to be aligned to the left. That is,
non-empty rows always have a first column.  The second part of the
schema consists of \emph{rules} that interpret the table defined by
the first part as a
rectangular grid and enforce structure. In particular, rules are of
the form $\varphi\to\rho$, where $\varphi$ selects a \emph{region}
consisting of cells in the grid
while $\rho$ is a regular expression constraining the 
content of the selected region. We utilize a so-called \emph{row-based} semantics:
every row in the region selected by
$\varphi$ should be of a form allowed by $\rho$.
We refer to $\varphi$ as the
\emph{selector expression} and to $\rho$ as the \emph{content expression}.

Next, we illustrate the features of the language by means of
examples. All examples are inspired by the use cases and
requirements drafted by the CSV on the Web W3C working
group~\cite{w3c-csv-usecases}.

\begin{example}\upshape\label{ex:simple}
    Figure~\ref{fig:climate} contains a slightly altered fragment (we use a comma as a column separator) of a
  CSV file mentioned in Use Case 3, \emph{``Creation of consolidated
    global land surface temperature climate
    databank''}~\cite{w3c-csv-usecases}. 
The \tdsl schema, displayed as Figure~\ref{fig:climate-schema}, starts by describing
parsing information indicating that the column delimiter is a comma while the
row delimiter is a newline. Lines starting with a \%-sign are comments. Tokens are defined based on regular
expressions (regex for short).\footnote{For ease
  of exposition, we adopt the concise regex syntax popularized by
  scripting languages such as Perl, Python, and
  Ruby~\cite{mastering-regex} in all of our examples.}  For instance,
anything that matches the regex \texttt{[0-9]\{4\}"."[0-9]\{2\}}
follows the format \emph{four digits, dot, two digits}, and is
interpreted by the token \texttt{Timestamp} in the rules of the
schema (similar for {\tt Temperature}). Notice that we keep the
regexes short (and sometimes imprecise) for readability, but they
can of course be made arbitrarily precise if desired.

\begin{figure}[!h]
  \begin{framed}
\vspace{-1mm}
\begin{verbatim}
       ,   ARUA,  BOMBO, ENTEBBE AIR
1935.04, -99.00, -99.00,       27.83
1935.12, -99.00, -99.00,       25.72
1935.21, -99.00, -99.00,       26.44
1935.29, -99.00, -99.00,       25.72
1935.37, -99.00, -99.00,       24.61
1935.46, -99.00, -99.00,       24.33
1935.54, -99.00, -99.00,       24.89
\end{verbatim}
\vspace{-.5cm}
  \end{framed}
  \caption{Example tabular data inspired by Use Case 3 in \cite{w3c-csv-usecases}.\label{fig:climate}}
\end{figure}

\begin{figure}[!h]
\begin{framed}
\vspace{-2mm}
\begin{verbatim}
%  Parsing information
%% Delimiters
Col Delim = ,
Row Delim = \n

%% Tokens
%% left: token name 
%% right: regex

Timestamp = [0-9]{4}"."[0-9]{2}
Temperature = (-)?[0-9]{2}"."[0-9]{2}
ARUA = ARUA
BOMBO = BOMBO
ENTEBBE AIR = ENTEBBE AIR

% Rules

row(1) -> Empty, ARUA, BOMBO, ENTEBBE AIR
col(1) -> Empty | Timestamp
col(ARUA) -> Temperature
col(BOMBO) -> Temperature
col(ENTEBBE AIR) -> Temperature
\end{verbatim}
\vspace{-.5cm}
\end{framed}
\caption{Schema for tabular data of the type in Figure~\ref{fig:climate}.
  \label{fig:climate-schema}}
\end{figure}

All the XML Schema primitive types like \textsf{xs:integer},
\textsf{xs:string}, \textsf{xs:date}, etc are pre-defined as tokens in
a \tdsl schema. There is also a special pre-defined token {\tt Empty}
to denote that a certain cell is empty.

Notice that the schema in Figure~\ref{fig:climate-schema} has three
token definitions in which the regex defines only one character
sequence (namely: \texttt{AURA}, \texttt{BOMBO}, \texttt{ENTEBBE
  AIR}). In the sequel, we will omit such rules for reasons of
parsimony. For the same reason, we omit the explicit definition of
column and row delimiters when they are a comma and newline character,
respectively.

  The rule 
        
  \centerline{\texttt{row(1) -> Empty, ARUA, BOMBO, ENTEBBE AIR}}

  \noindent selects all cells in the first row and requires that the first is
  empty, the second contains {\tt ARUA}, the third {\tt BOMBO}, and
  the fourth {\tt ENTEBBE AIR}. Next, \texttt{col(1)} selects the
  region consisting of all cells in the first column. As \chisel
  assumes a row-based semantics per default,\footnote{We discuss an
    extension in Section~\ref{sec:extensions}.} the rule

  \centerline{\texttt{col(1) -> Empty | Timestamp}}

  \noindent 
  requires that every row in the selected region (notice that each such row consists of a single cell) is  
 either empty ({\tt Empty}) or contains data that matches the {\tt
    Timestamp} token. The expression \texttt{col(AURA)} selects
    all cells in the column below the cell containing {\tt ARUA}.
    The rule

       \centerline{\texttt{col(ARUA) -> Temperature}}
    
\noindent
therefore requires that every row in the selected region matches the {\tt Temperature} token. The two remaining rules
are analogous. The fragment in Figure~\ref{fig:climate}
satisfies the schema of Figure~\ref{fig:climate-schema}. \hfill $\blacksquare$
\end{example}

Before moving on to some more advanced examples, we discuss in more
detail the semantics of selector and content expressions.
Each cell in a table is identified by its \emph{coordinate}, which is
a pair $(k,\ell)$ where $k$ indicates the row number ($k \geq 1$) and
$\ell$ the column number ($\ell \geq 1$). In each rule $\varphi \to
\rho$, the selector expression $\varphi$ returns a {\emph{set}} of
coordinates (a region) and $\rho$ is a regular expression defining
the allowed structure of each row in the region selected by
$\varphi$. It is important to note that in each such row only the cells
which are selected by $\varphi$ are considered. Another way to
interpret the row-based semantics is that of a `group by' on the
selected region per row.

  \begin{figure}[t]
    \begin{framed}
      \vspace{-2mm}
\begin{verbatim}
QS601EW
Economic activity
27/03/2011

         ,        , Count   , Count
         ,        , Person  , Person     
         ,        , Activity, Activity
GeoID    , GeoArea, All     , Part-time
E92000001, England, 38881374, 27183134
W92000004, Wales  , 2245166 , 1476735
\end{verbatim}
      \vspace{-.5cm}
    \end{framed}
    \caption{Fragment of a CSV-like-file, inspired by Use Case 2 in \cite{w3c-csv-usecases}.\label{ex:1:csv}}
  \end{figure}

  \begin{figure}[t]
    \begin{framed}
      \vspace{-2mm}
\begin{verbatim}
%% Tokens
%% left: token name 
%% right: regex

name = QS[0-9]*EW
ctype = Economic Activity
geo_id = E[0-9]*

% Rules

row(1) -> name
row(2) -> ctype
row(3) -> Date
row(4) -> Empty
row(5) -> Empty, Empty, Count*
row(6) -> Empty, Empty, Person*
row(7) -> Empty, Empty, Activity*
row(8) -> GeoID, GeoArea, String*
col(GeoID) -> geo_id
col(GeoArea) -> String
down+(right+(GeoArea)) -> Number*
\end{verbatim}
      \vspace{-.5cm}
    \end{framed}
    \caption{``Schema'' for files of the type in Figure~\ref{ex:1:csv}.
      \label{ex:1:schema}}
  \end{figure}

The last rule we discussed in Example~\ref{ex:simple} uses a symbolic
coordinate {\tt ARUA} in its selector expression.  Its semantics is as
follows: a token $\tau$ returns the set of all coordinates $(k,\ell)$
whose cell contents matches $\tau$.  The operator {\tt row} applied to
a coordinate $(k,\ell)$ returns the set of coordinates
$\{(k,\ell')\mid \ell'>\ell\}$. This corresponds to the row consisting
of all elements to the right of $(k,\ell)$.  Note that coordinate
$(k,\ell)$ itself is not included. Applying {\tt row} to a \emph{set}
$S$ of coordinates amounts to taking the union of all {\tt
  row}$((k,\ell))$ where $(k,\ell) \in S$.  Similarly, the operator
{\tt col} applied to $S$ returns the union of the regions
$\{(k',\ell)\mid k'>k\}$ for each $(k,\ell)$ in $S$, corresponding to
columns below elements in $S$.
The selector expressions {\tt row(1)} and {\tt col(1)} that select the
``first row'' and ``first column'', respectively, use syntactic
sugar to improve readability. Formally, the notation {\tt row(k)} and
{\tt col(l)} abbreviate {\tt row($\{(k,0)\}$)} and {\tt col($\{(0,\ell)\}$)},
respectively. Using the same principle as above, this means that {\tt
  row(k)} selects the cells $\{(k,\ell) \mid \ell > 0\}$ and {\tt
  col(l)} selects $\{(k,\ell) \mid k > 0\}$. Notice that we use the
convention that the top left coordinate in tabular data bears the
coordinate $(1,1)$ --- for \emph{first row}, \emph{first column}. While
the value 0 does not refer to any cell in the table, it is used to define the
semantics of expressions.

The next example illustrates the use of slightly more complex
expressions for navigation and content.
\begin{example}\upshape\label{ex:medium}
  Figure~\ref{ex:1:csv} displays a (slightly altered) fragment of a
  CSV-like-file inspired by Use Case 2 (\emph{``Publication of National
  Statistics''}) in \cite{w3c-csv-usecases}.
  This fragment originates from the Office for National Statistics
  (UK) and refers to the dataset ``QS601EW Economic activity'' derived
  from the 2011 Census. The file starts with three lines of metadata,
  referring to the name of the file and the census date, continues
  with a blank line, before listing the actual data separated by
  commas.   Notice that this file is, strictly speaking, not a
  comma-separated-value file because not all rows have an equal number
  of columns.\footnote{Actually CSV does not have a standard, but the
    informative memo RFC4180 (http://tools.ietf.org/html/rfc4180)
    states rectangularity in paragraph 2.4.} Indeed, the first four
  rows have only (at most) one column and the later rows have four columns.
 Figure~\ref{ex:1:schema} depicts the \TDSL schema describing
  the structure of such tables.

  \begin{figure*}[t!]
    \vspace{-5mm}
    \begin{framed}
\begin{verbatim}
subject  predicate      object  provenance
:e4      type           PER
:e4      mention       "Bart"   D00124 283-286
:e4      mention       "JoJo"   D00124 145-149 0.9
:e4      per:siblings  :e7      D00124 283-286 173-179 274-281
:e4      per:age       "10"     D00124 180-181 173-179 182-191 0.9
:e4      per:parent    :e9      D00124 180-181 381-380 399-406 D00101 220-225 230-233 201-210
\end{verbatim}
      \vspace{-5mm}
    \end{framed}
    \caption{Fragment of a CSV-like file, inspired by Use Case 13 in \cite{w3c-csv-usecases}.\label{ex:2:csv}}
  \end{figure*}

  The schema starts by describing parsing information, analogous to
  Example~\ref{ex:simple}.  The first four rules are very basic and
  are similar to those of Example~\ref{ex:simple}.
                    We first describe the fifth rule of the
  schema: 

    \centerline{\texttt{row(5) -> Empty, Empty, Count*}}
  
  \noindent
  selects all cells in the fifth row, requiring the first two to be
  {\tt Empty} and the remaining non-empty cells to contain {\tt
    Count}. We note that the original data fragment from
  \cite{w3c-csv-usecases} contains 16 such columns. The remaining
  rules constraining rows are similar.

          The rule
    \texttt{col(GeoID) -> geo\_id}
  selects all cells below cells containing the word {\tt GeoID}. 
   The content
  expression says that this column contains values that match the {\tt
    geo\_id} token. 
          The last rule is the most interesting one:

    \centerline{\texttt{down+(right+(GeoArea)) -> Number*}.}
    
  \noindent
  This rule selects all cells appearing strictly downward and to the right of {\tt
    GeoArea} and requires them to be of type {\tt Number}. 
    More precisely, {\tt GeoArea} is a symbolic coordinate 
    selecting all cells containing the word {\tt GeoArea}.
    The navigational operators {\tt right} and {\tt down} select cells one step
    to the right and one step down, respectively, from a given coordinate. The operator {\tt +} indicates an arbitrary strictly positive number of applications of the navigational operator to which it is applied. In particular, as on the table given in Figure~\ref{ex:1:csv},  {\tt GeoArea} is the singleton cell with
  coordinate $(8,2)$, {\tt right(GeoArea)} returns $\{(8,3)\}$, while
 {\tt right+(GeoArea)} is the region $\{(8,\ell) \mid \ell >
  2\}$. Likewise, {\tt down(right+(GeoArea))} is the region $\{(9,\ell)
  \mid \ell > 2\}$ and, finally, {\tt down+(right+(GeoArea))} is the
  region downward and to the right of the {\tt GeoArea} coordinate, that is,
  $\{(k,\ell) \mid k > 8$ and $\ell > 2\}$. \hfill $\blacksquare$
        \end{example}

Example~\ref{ex:medium} uses more refined navigation than just
selecting a row or a column. \TDSL has four navigational axes: {\tt
  up, down, left, right} which navigate one cell
upward, downward, leftward, or rightward. These axes can be applied to
a set $S$ of coordinates and add a vector $v$ to it. More formally,
an axis $A$, when applied to a set $S$ of coordinates, returns $A(S)
:= \{c+v_A \mid c \in S\}$. Here,
\begin{compactitem}
\item $v_A = (-1,0)$ when $A = {\tt up}$, 
\item $v_A = (1,0)$ when $A = {\tt down}$.
\item $v_A = (0,1)$ when $A = {\tt right}$, and
\item $v_A = (0,-1)$ when $A = {\tt left}$.
\end{compactitem}
Furthermore, there is also an axis {\tt cell} that does not navigate
away from the current cells, i.e., ${\tt cell}(S) = S$. When applying an axis to a set of coordinates, we always return only the coordinates that are valid coordinates in the table. For example,
{\tt left(\{1,1\})} returns the empty set because $(1,0)$ is not a
cell in the table.

While the just discussed features of \tdsl are sufficient to describe
the structure of almost all CSV-like data on the Web Working group use
cases~\cite{w3c-csv-usecases}, we extend in
Section~\ref{sec:formalmodel} \tdsl to include XPath-like
navigation. These features will be useful for annotations and
transformations, see Section~\ref{sec:extensions}.
We now showcase \tdsl by illustrating it on the most challenging of
the W3C use cases.

\begin{example}\upshape
  Figure~\ref{ex:2:csv} contains a fragment of a CSV-like file, inspired by
  Use Case 13 in
  \cite{w3c-csv-usecases} (\emph{``Representing Entities and Facts Extracted From Text''}). Figure~\ref{ex:2:schema} depicts the
  \tdsl schema.   Compared to the previous examples, the most
  interesting rule is

  \quad \texttt{down+(right*(provenance))}

  \hfill \texttt{-> (prov-book, prov-pos*, prov-node?)*}

  \noindent 
   which states that every row that starts with a coordinate of the
   form $(k,4)$ (as {\tt provenance} only occurs in column 4) with $k > 1$ should match \texttt{(prov-book,
     prov-pos*, prov-node?)*}. Notice that the empty row starting at
   $(2,4)$ also matches this expression. Here, {\tt *} denotes an arbitrary number (including zero) of applications of the navigational operator. \hfill $\blacksquare$
\end{example}

  \begin{figure}
    \begin{framed}
      \vspace{-2mm}
\begin{verbatim}
% Tokens 
%% left: token name 
%% right: regex

rdf-id    = [a-zA-Z0-9]*:[a-zA-Z0-9]*
rdf-lit   = "[a-zA-Z0-9]*"
prov-book = D[0-9]{5}
prov-pos  = [0-9]{3}-[0-9]{3}
prov-node = [0-9].[0.9]
word      = [a-z]*

% Rules
row(1) -> subject,predicate,object, provenance
col(subject)   -> rdf-id
col(predicate) -> word | rdf-id
col(object)    -> rdf-lit | rdf-id
down+(right*(provenance)) 
         -> (prov-book, prov-pos*, prov-node?)*
\end{verbatim}
      \vspace{-5mm}
    \end{framed}
    \caption{Schema for files of the type in Figure~\ref{ex:2:csv}.\label{ex:2:schema}}
  \end{figure}

\makeatletter{}
\section{Formal Model}
\label{sec:formalmodel}

In this section, we present a formal model for the logical core of
\tdsl. We refer to this core as \corechisel and discuss extensions
in Section~\ref{sec:extensions}. We first define the data model.

\smallskip \noindent {\bf Tables.}  For a number $n \in \nat$, we denote the set
$\{1,\ldots,n\}$ by $[n]$.  By $\bot$ we denote a special
distinguished null value. For any set $\cV$, we denote the set
$\cV\cup\{\bot\}$ by $\cVbot$. The W3C formalizes tabular documents
through \emph{tables}, which can be defined as follows.

\begin{definition}[Core Tabular Data Model, \cite{w3c-tabular-data-model}]\upshape\label{def:tabular}
  Let $\cV$ be a set. A \emph{table} over $\cV$ is an $n \times m$
  matrix $T$ (for some $m,n \in \nat)$ in which each cell carries a
  value from $\cVbot$.  We say that $T$ has $n$ \emph{rows} and $m$
  \emph{columns}.     A \emph{(table) coordinate} is an element of $[n]\times[m]$.  A
  \emph{cell} is determined by coordinate $(k,\ell) \in [n] \times [m]$
  and its \emph{content} is the value $T_{k,\ell} \in \cVbot$ at the
  intersection of row $k$ and column $\ell$. We denote the
  \emph{coordinates} of $T$ by $\coords(T)$.
\end{definition}

\noindent {\bf Tabular documents.} Notice that tables are always
rectangular\footnote{Tables are required to be rectangular by Section
  2.1 of \cite{w3c-tabular-data-model}; as by paragraph 2.4 of the
  memo RFC4180 on CSV (http://tools.ietf.org/html/rfc4180).} whereas,
in Section~\ref{sec:examples}, this was not the case for some of the
use cases. We model this by padding shorter rows by
$\bot$. More precisely, we see the correspondence between
\emph{tabular documents}, i.e., text files that describe tabular data
(like CSV files), and tables as follows. Let $\alphabet$ be a finite
alphabet and let $D$ be a set of \emph{delimiters}, disjoint from
$\alphabet$. We assume that $D$ contains two designated elements which
we call \emph{row delimiter} and \emph{column delimiter}, which, as
the name indicates, separate cells vertically or
horizontally.(We discuss other delimiters in Section~\ref{sec:extensions}.)  Therefore, a sequence of symbols in
$(D\cup\alphabet)^*$ can be seen as a table over $\alphabet^*$:
every row delimiter induces a new row in the table, every column
delimiter a new column, and the $\alphabet$-strings between delimiters
define the cell contents. In the case that some rows have fewer
columns than others, missing columns are expanded to the right and
filled with $\bot$. Conversely, a table over $\alphabet^*$ can also be
seen as a string over $(D\cup\alphabet)^*$ by concatenating all its
cell values in top-down left-to-right order and inserting cell delimiters
and row delimiters in the correct places; we do not insert column
delimiters next to $\bot$-cells. As such, when we convert a tabular
document into a table and back; we obtain the original tabular
document. 

We consider both representations in the remainder of the paper. In particular
we view the table representation as a structure that allows efficient
navigation in all directions and the string representation as
structure for streaming validation. 
\smallskip\noindent {\bf \corechisel schemas.} Abstractly speaking, a
\corechisel schema $S$ is a $4$-tuple $(D, \tokens, \tokendef, R)$
where $D$ is the set of delimiters; $\tokens$ is a finite set of
tokens; $\tokendef$ is a mapping that associates a regular expression
over $\alphabet$ to each token $\tau \in \tokens$; and $R$ is a
\emph{tabular schema}, a set of rules that constrain the admissible
table contents (further defined below).

Checking whether a tabular document $\sigma$ in $(D\cup\alphabet)^*$
satisfies $S$ proceeds conceptually in three phases. In the first
phase, the delimiters are used to parse $\sigma$ into a table $T^\text{raw}$ over
$\alphabet^*$, as described above. In the second phase, the token
definitions $\tokendef$ are used to transform $T^\text{raw}$ into a \emph{tokenized}
table $T$, which is a table where each cell contains a set of tokens
(i.e., each cell contains a subset of $\Delta$, namely those tokens
that match the cell). Formally,
$T$ is the table of the same dimension as $T^\text{raw}$  such that 
\[ T_{k,\ell} = \{ \tau \in \tokens \mid T^\text{raw}_{k,\ell} \in
\lang(\tokendef(\tau)) \}. \] Here $\lang(\cdot)$ denotes the language
of a regular expression.  Finally, the rules in $R$ check validity of
the tokenized table $T$ (and not of the raw table $T^\text{raw}$), as
explained next.

\smallskip \noindent {\bf Tabular schema.}  The \emph{tabular schema}
$R$ describes the structure of the tokenized table. Intuitively, a
tabular schema is a set of rules $s \to c$ in which $s$ \emph{selects}
a region in the table and $c$ describes what the \emph{content} of the
selected region should be. More formally, a \emph{region} $z$ of an $n
\times m$ table $T$ is a subset of $[n] \times [m]$. A \emph{region
  selection language $\cS$} is a set of expressions such that every $s
\in \cS$ defines a region in every table $T$. More precisely, $s[T]$
is always a (possibly empty) region of $T$. A \emph{content language} $\cC$
is a set of expressions such that every $c\in\cC$ maps each region $z$
of $T$ to true or false. We denote by $T, z\models c$ that $c$ maps $z$
to true in $T$ and say that $z$ \emph{satisfies} $c$ in $T$.

\begin{definition}[Tabular Schema]\upshape
  A \emph{(tabular) sche\-ma} (over $\cS$ and $\cC$) is a set $R$ of
  rules $s \to c$ for which $s \in \cS$ and $c \in \cC$. A table $T$
  \emph{satisfies} $R$, denoted $T \models R$, when for every rule $s
  \to c \in R$ we have that $T, s[T]\models c$.
\end{definition}

The above definition is very general as it allows arbitrary languages
for selecting regions and defining content. We now propose concrete
languages for these purposes.

\smallskip\noindent{\bf Region selection
  expressions.}\label{sec:selector-expressions}
Our region selection language is divided into two sorts of
expressions: \emph{coordinate expressions} (ranged over by $\varphi, \psi$)
and \emph{navigational expressions} (ranged over by $\alpha$, $\beta$),
defined by the following syntax:
\begin{align*}
  \varphi, \psi & := a \mid \textsf{root} \mid \textsf{true} \mid \varphi \lor \psi \mid
  \varphi \land \psi  
  \mid \lnot \varphi \mid \langle \alpha \rangle \mid \alpha(\varphi) \\
  \alpha, \beta & := \varepsilon \mid \up \mid \down \mid \left \mid
  \right \mid [\varphi] \mid (\alpha \cdot \beta) \mid (\alpha | \beta)
  \mid (\alpha^*)
\end{align*}
Here, $a$ ranges over tokens in $\Delta$ and \textsf{root} is a
constant referring to coordinate $(1,1)$. When evaluated over an $n
\times m$ table $T$ over $2^\Delta$, a coordinate expression $\varphi$
defines a region $\sem{\varphi}_T \subseteq \coords(T)$, whereas a navigational
expression $\alpha$ defines a function $\sem{\alpha} : 2^{\coords(T)} \to
2^{\coords(T)}$, as follows.
\begin{align*}
  \sem{a}_T & := \{ (i,j) \in [n] \times [m] \mid a \in T_{i,j} \} \\
  \sem{\textsf{root}}_T & := \{(1,1)\}\\
  \sem{\textsf{true}}_T & := [n] \times [m] \\
  \sem{(\varphi \lor \psi)}_T & := \sem{\varphi}_{T} \cup \sem{\psi}_{T} \\
  \sem{(\varphi \land \psi)}_T & := \sem{\varphi}_T \cap \sem{\psi}_T \\
  \sem{(\lnot \varphi)}_T & := ([n] \times [m]) \setminus \sem{\varphi}_T \\
  \sem{\langle \alpha \rangle}_T & := \{ c \in \coords(T) \mid \sem{\alpha(\{c\})}_T \neq
  \emptyset\} \\ 
  \sem{\alpha(\varphi)}_T & := \sem{\alpha(\sem{\varphi}_T)}_T
\end{align*}
Furthermore, for every set of coordinates $C \subseteq \coords(T)$,
\begin{align*}
  \sem{\varepsilon(C)}_T & := C\\
  \sem{\up(C)}_T & := \{ (i-1,j) \mid (i,j) \in C, i > 1\} \\
  \sem{\down(C)}_T & :=\{ (i+1,j) \mid (i,j) \in C, i < m\} \\
  \sem{\left(C)}_T & :=\{ (i,j-1) \mid (i,j) \in C, j > 1\} \\
  \sem{\right(C)}_T & := \{ (i,j+1) \mid (i,j) \in C, j < n\} \\
  \sem{[\varphi](C)}_T & := C \cap \sem{\varphi}_T \\
  \sem{(\alpha \cdot \beta)(C)}_T & := \sem{\beta(\sem{\alpha(C)}_T)}_T\\
  \sem{(\alpha | \beta)(C)}_T & := \sem{\alpha(C)}_T \cup \sem{\beta(C)}_T \\
  \sem{(\alpha^*)(C)}_T & := \bigcup_{i \geq 0} \sem{\alpha^i(C)}_T 
\end{align*}
Here,  $\alpha^i(C)$ abbreviates the $i$-fold composition
$\alpha \cdots\alpha(C)$. We also use this abbreviation in the remainder.
Notice that every coordinate $(k,\ell)$ of $T$ can be expressed
as $\down^{k-1} \cdot \right^{l-1}(\textsf{root})$.
For navigational expressions $\alpha$, we abbreviate $\alpha \cdot\alpha^*$ by
$\alpha^+$ and $\alpha | \varepsilon$ by $\alpha?$. One can read
$\alpha(\varphi)$ as ``apply the regular expression $\alpha$ to
$\varphi$''. The definition of the semantics of $\alpha\cdot\beta$ is
conform with this view. 

\begin{example}\upshape
  Region selection expressions navigate in tables, similar to how XPath
  expressions navigate on trees. For example, assuming \textsf{dummy} to be
  a token for {\tt -99.00} in Figure~\ref{fig:climate}, the
  expression
  $$(\right^+(\textsf{root}) \land \lnot (\up^*(\textsf{dummy})))$$ 
selects the top cells of columns that do not contain a dummy value
anywhere.  In the excerpt of Figure~\ref{fig:climate}, this expression
hence selects the cell containing {\tt ENTEBBE AIR}.

Assuming the token \textsf{literal} for cells with quotation marks
(regex \verb|\"[a-zA-Z0-9]\"|) in Figure~\ref{ex:2:csv}, the expression
  $$ \down^+ \cdot [\textsf{literal}] \cdot \right^+ (\textsf{object}) $$
selects all provenance information for rows in which the {\tt object}
is between quotes, like {\tt "Bart", "JoJo", and "10"}. Notice in particular that the semantics of the operator
``$[\ ]$'' in navigational expressions is the same as
filter-expressions in XPath. \hfill $\blacksquare$
\end{example}

Readers familiar with propositional dynamic logic~\cite{FischerL-jcss79} (PDL for short) will
recognize that the above language is nothing more than
propositional dynamic logic, tweaked to navigate in tables.

There are some differences between the syntax of \corechisel and the
region selection expressions used in the examples of
Section~\ref{sec:chisel:examples}. In particular, the latter examples use the
following abbreviations.
\begin{remark}\upshape
  (i) As already observed, absolute coordinates in
  Section~\ref{sec:chisel:examples} are syntactic sugar for
  navigations that start at the root. For example, the coordinate
  $(2,2)$ would be unfolded to $\down \cdot \right(\textsf{root})$ in
\corechisel.

(ii) The keywords {\tt row} and {\tt col} in
Section~\ref{sec:chisel:examples} are syntactic sugar for $\right^+$
and $\down^+$ in \corechisel, respectively. So, {\tt col((2,2))},
which denotes \emph{the column below the cell $(2,2)$} in
Section~\ref{sec:chisel:examples}, is syntactic sugar for
$\down^+(\down \cdot \right(\textsf{root}))$.

(iii) The only exception to rule (ii) above are row and
column expressions of the form $\row(k)$ and $\col(\ell)$. These
abbreviate $\right^*(k,1)$ and $\down^*(1,\ell)$, respectively. (Where
$(k,1)$ and $(1,\ell)$ need to be further unfolded themselves.) 

As an example, the selection expression {\tt row(1)} of Figure
\ref{ex:2:schema} can be written as $\right^*(\textsf{root})$ or,
equivalently, $\right^*$ and the expression {\tt col(subject)} as
$\down^+(\text{\texttt{subject}})$.  \hfill $\blacksquare$

\end{remark}

\smallskip\noindent
{\bf Content expressions.}
\label{sec:content-expressions}
A \emph{content expression} is simply a regular expression $\rho$ over
the set of tokens $\Delta$.  To define when a region in a tokenized
table $T$ is valid with respect to content expression $\rho$, let us
first introduce the following order on coordinates. We say that
coordinate $(k,\ell)$ precedes coordinate $(k', \ell')$ if we visit $(k,\ell)$
earlier than $(k',\ell')$ in a left-to-right top-down traversal of the
cells of $T$, i.e., it precedes it in
lexicographic order. Formally, $(k, \ell) < (k', \ell')$ if $k < k'$ or if $k =
k'$ but $\ell < \ell'$. 

Now, let $T$ be a tokenized table, let $z$ be a region of $T$, and let
$\rho$ be a content expression. Then $(T,z)$ satisfies the content
expression $\rho$ \emph{under the region-based semantics}, denoted $T, z
\regionmodels \rho$ if there exist tokens $a_1,\dots, a_n \in \Delta$ such that
$a_1 \dots a_n \in \lang(\rho)$ and $a_i \in T_{c_i}$, where $c_1, \dots,
c_n$ is the enumeration in table order of all coordinates in $z$.

To define the row-based semantics we used in
Section~\ref{sec:examples}, we require the following notions. Let $z$
be a region of $T$. We say that subregion $z' \subseteq z$ is a
\emph{row of} $z$ if there exists some $k$ such that $z' = \{ (k,\ell)
\mid (k,\ell) \in z \}$. Now, $(T,z)$ satisfies the content expression
$\rho$ \emph{under the row-based semantics}, denoted $T,z \models
\rho$, if for every row $z'$ of $z$, we have $T,z' \regionmodels
\rho$.

\begin{remark}\upshape
  Recall that, for ease of exposition, we allowed tables to be
  non-rectangular in Section~\ref{sec:chisel:examples} whereas in our
  formal model, tables are always rectangular. In particular, shorter
  rows are padded with $\bot$ to obtain rectangularity. This implies
  that, some content expressions of Section~\ref{sec:chisel:examples}
  need to be adapted in our formal model. For example, the rule {\tt
    row(1) -> name} of Figure~\ref{ex:1:schema} needs to be adapted
  to $\row(1) \to \text{\texttt{name}}, \bot, \bot, \bot$ to 
  take the padding into account. \hfill $\blacksquare$
\end{remark}

\makeatletter{}\section{Efficient Validation}\label{sec:evaluation}

In this section we consider the  \emph{validation} (or
\emph{evaluation}) problem for
tabular schemas. This problem asks, given a tokenized table or tabular
document $T$ and a tabular schema $R$, whether $T$ satisfies $R$. We
consider the problem in a \emph{main-memory} and \emph{streaming}
variant. Intuitively, $T$ is given as a table in the former and as a
tabular document in the latter setting. 

\subsection{Validation in Linear Time}\label{sec:evaluation:linear}

When $T$ is given as a tokenized table, we can essentially assume that
we can navigate from a cell $(i,j)$ to any of its four neighbours
$\up(\{(i,j)\})$, $\down(\{(i,j)\})$, $\left(\{(i,j)\})$, and
  $\right(\{(i,j)\})$ in constant time. Under these assumptions we
show that $T$ can be validated against a tabular schema in linear time
combined complexity.\footnote{Combined complexity is a standard
  complexity measure introduced by Vardi; see~\cite{Vardi-stoc82}.}
The proof strongly relies on the linear time combined complexity for
PDL model checking.
\begin{theorem}\label{theo:evaluation}\label{THEO:EVALUATION}
  The valuation problem for a tabular document $T$ and a schema $R$
  is in linear time combined
  complexity, that is, time $O(|T||R|)$.
\end{theorem}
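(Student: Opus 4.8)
The plan is to handle each rule $s \to c$ of $R$ in two independent stages---first computing the region $\sem{s}_T \subseteq \coords(T)$ selected by the coordinate expression $s$, and then testing the content expression $c$ against that region under the row-based semantics---and to show that each stage costs $O(|T|\,(|s|+|c|))$ per rule. Throughout I rely on the stated assumption that $T$ is given as a tokenized table in which each cell can reach its (at most one) neighbour in each of the four directions in constant time.

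For the region stage I would make the connection to propositional dynamic logic explicit. Read $T$ as a finite Kripke structure $M_T$ whose states are the coordinates $\coords(T)$, whose atomic propositions are the tokens of $\Delta$ (with $(i,j)$ satisfying $a$ iff $a \in T_{i,j}$), and which carries four accessibility relations, one per axis \up, \down, \left, \right, each relating a cell to its neighbour. Coordinate expressions are then PDL state formulas and navigational expressions are PDL programs, with $[\varphi]$, $\cdot$, $|$ and ${}^*$ playing the roles of test, sequential composition, choice and iteration; a routine induction on the syntax shows that $\sem{s}_T$ is exactly the set of states of $M_T$ satisfying $s$ (orienting the relations so that the image $\alpha(\varphi)$ and the diamond $\langle\alpha\rangle$ become ordinary PDL reachability and pre-image computations). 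Since every cell has at most four neighbours, $M_T$ has $O(|T|)$ states and $O(|T|)$ edges, so invoking the linear-time combined complexity of PDL model checking yields $\sem{s}_T$ in time $O(|M_T|\cdot|s|) = O(|T|\cdot|s|)$. Concretely I would evaluate $s$ bottom-up, storing at each subexpression a bit-vector over $\coords(T)$: Boolean connectives are pointwise set operations and atomic tokens are read off the labeling, each in time $O(|T|)$, while an application $\alpha(\varphi)$ propagates the set $\sem{\varphi}_T$ through the program $\alpha$. The hard part is the star $\alpha^*$, whose semantics is a reflexive--transitive closure: a naive fixpoint iteration would repeat an $O(|T|)$-pass up to $O(|T|)$ times and thus become quadratic in the table. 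The linear bound is precisely what the PDL model-checking machinery buys us, obtaining the closure by a single graph reachability (equivalently, by reachability in the product of $M_T$ with an automaton of size $O(|\alpha|)$ compiled from $\alpha$), and this is the step I expect to be the main obstacle, since an unguarded treatment of iteration would destroy the linear bound.

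For the content stage, fix the already-computed region $z = \sem{s}_T$. Under the row-based semantics I must verify, for every row $z'$ of $z$, that $T, z' \regionmodels c$, i.e.\ that some word $a_1 \cdots a_\ell \in \lang(c)$ can be formed with $a_t$ chosen from the token set of the $t$-th cell of $z'$ in table order. I would compile $c$ once into a Thompson NFA $N_c$ of size $O(|c|)$ and simulate it on each row on the fly: maintaining the current set of active states, at each cell I follow the $N_c$-transitions over all tokens carried by that cell and then take an $\epsilon$-closure, accepting the row iff a final state survives after its last cell. The $\epsilon$-closure costs $O(|c|)$ per cell, charged to the at most $|T|$ cells of $z$, and following transitions costs $O(|c|)$ per token occurrence, charged to the at most $|T|$ token occurrences in $T$; hence the content stage for this rule runs in $O(|T|\cdot|c|)$. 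Summing the two stages over all rules then gives $O\bigl(|T|\sum_{s\to c\in R}(|s|+|c|)\bigr)=O(|T|\,|R|)$, as required.
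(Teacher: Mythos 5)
Your proposal is correct and follows essentially the same route as the paper: the paper's proof also reads the table as a grid-shaped Kripke structure, treats region selection expressions as PDL formulas, and invokes the linear-time combined complexity of PDL global model checking (citing Alechina--Immerman and Cleaveland--Steffen) to get $O(|T|\,|s|)$ per selector. The only difference is that you additionally spell out the content-expression stage (Thompson NFA simulation per row in $O(|T|\,|c|)$), which the paper's proof, even in the appendix, leaves implicit.
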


\makeatletter{}\subsection{Streaming Validation}
\label{sec:streaming-evaluation}

Even though Theorem~\ref{theo:evaluation} implies that 
\chisel schemas can be efficiently validated, the later claim only holds 
true when the tabular document can be fully loaded in memory and multiple passes can be made through the document. However, 
when the input data is large it is sometimes desirable to have a \emph{streaming validation algorithm}
that makes only a single pass over the input tabular document and uses
only limited memory. In this section we identify several fragments of
\corechisel that admits such streaming validation algorithms.

\newcommand{\event}[1]{\langle \textnormal{\textsf{cell }} #1\rangle}

\smallskip\noindent {\bf Streaming model.} Let us begin by defining
when an algorithm validates in a streaming fashion. In
this respect, we draw inspiration from the SAX Streaming API for XML:
we can view a tokenized table $T$ as a sequence of events generated by
visiting the cells of $T$ in table order. Here, whenever we visit a new
cell, an event $\event{\Gamma}$ is emitted, with $\Gamma$ the set of
tokens in the visited cell. Whenever we move to a new row, an event of
type \newrow\ is emitted.

Note that the tokenized event stream can easily be generated ``on the
fly'' when parsing a tabular document: we start reading the tabular
document, one character at a time, until we reach a delimiter. All
non-delimiter characters are used as input to, e.g., a finite state automaton
that allows us to check which tokens match the current cell's
content. When we reach a delimiter, a $\event{\Gamma}$ event is emitted with
the corresponding set of matching tokens. If the delimiter is a row delimiter,
then also a \newrow\ is emitted. We repeat this until the end of the file.
\begin{example}\upshape
  Consider the tabular document from Figure~\ref{fig:climate} together
  with the corresponding \chisel schema $S$ in
  Figure~\ref{fig:climate-schema}. The tokenized table of this document
  according to $S$ yields the event stream
  \begin{multline*}
    \event{\emptyset} \event{ \{\textnormal{\texttt{ARUA}}\}}  \event{
    \{\textnormal{\texttt{BOMBO}}\}} \event{
    \{\textnormal{\texttt{ENTEBBE AIR}}\}} \\ \newrow 
\event{\{\textnormal{\texttt{Timestamp}}\}}
\event{\{\textnormal{\texttt{Temperature}}\}} \\
\event{\{\textnormal{\texttt{Temperature}}\}} 
\event{\{\textnormal{\texttt{Temperature}}\}} \newrow \dots
  \end{multline*}
\end{example}

\begin{definition}[Streamability]\upshape
A tabular schema $R$ is said to be \emph{weakly streamable}, if there
exists a Turing Machine $M$ that
\begin{compactitem}[-]
\item can only read its input tape once,  from left to right;
\item for every tokenized table $T$, when started with the event
  stream of $T$ on its input tape, accepts iff $T \models R$; and
\item has an auxiliary work tape that can be used during processing,
  but it cannot use more than $O(m \log(n))$ of space on this work
  tape, where $n$ is the total number of cells in $T$, and $m$ the number of
  columns.
\end{compactitem}
We say that $R$ is \emph{strongly streamable} if the Turing Machine $M$
only requires $O(\log(n))$ space on its work tape.
\end{definition}
Here, strong streamability corresponds to the commonly
studied notions of streaming evaluation. We consider weak
streamability to be very relevant as well because, based on the W3C use
cases, tabular data often seems to be similar in spirit to relational
tables and, in these cases, is very narrow and deep. In particular, $m
= O(\log n)$ in these cases.

\smallskip\noindent {\bf Weak streamability.} To enable streaming
validation, we restrict our attention to so-called \emph{forward}
coordinate and navigational expressions which are expressions where
$\langle \alpha \rangle$ is not allowed, and we never look $\up$ or
$\left$. That is, a coordinate or navigational expression is
  \emph{forward} if it is generated by the following syntax.
\begin{align*}
  \varphi, \psi & := a \mid \textsf{root} \mid \textsf{true} \mid \varphi \lor \psi \mid
  \varphi \land \psi  
  \mid \lnot \varphi \mid  \alpha(\varphi) \\
  \alpha, \beta & := \varepsilon \mid \down \mid 
  \right \mid [\varphi] \mid (\alpha \cdot \beta) \mid (\alpha + \beta)
  \mid (\alpha^*)
\end{align*}
We do not consider the operator $\langle \alpha \rangle$ in the
forward fragment because it can be seen as a backward operator:
$\langle \right \cdot [a] \rangle$ is equivalent to $\left(a)$. 

A \corechisel schema is \emph{forward} if it mentions only forward
coordinate expressions.

\begin{theorem}\label{theo:weak-stream}\label{THEO:WEAK-STREAM}
  Forward \corechisel is weakly streamable.
\end{theorem}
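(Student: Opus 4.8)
The plan is to exploit a structural \emph{locality} property of forward expressions: whether a cell lies in the region defined by a forward coordinate expression depends only on cells that have already been seen in table order. Concretely, I would first prove by structural induction the claim that for every forward coordinate expression $\varphi$ and every cell $(i,j)$, membership $(i,j)\in\sem{\varphi}_T$ is determined solely by the contents of the cells in the \emph{north-west rectangle} $N_{i,j}=\{(i',j')\mid i'\le i \text{ and } j'\le j\}$. The base cases ($a$, $\textsf{root}$, $\textsf{true}$) and the Boolean cases are immediate. For $\alpha(\varphi)$ the point is that a forward navigational expression only moves $\down$ and $\right$, so any cell reached from a source lies weakly south-east of that source, and every intermediate cell where a filter $[\psi]$ is tested lies north-west of the target; combining this with the induction hypotheses for $\varphi$ and for the filter predicates $\psi$ yields the claim. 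Since every cell of $N_{i,j}$ precedes or equals $(i,j)$ in table order, all information needed to decide membership of $(i,j)$ is available by the time the stream reaches $(i,j)$. This is exactly where forwardness is used, and it explains why $\langle\alpha\rangle$ is excluded: $\langle\alpha\rangle$ asks about cells lying south-east of the current one, i.e.\ in the future of the stream.

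Next I would turn this locality property into a concrete single-pass algorithm whose state is organised \emph{per column}. For each navigational subexpression $\alpha$ in the schema I fix an NFA $A_\alpha$ over the move alphabet $\{\down,\right,\varepsilon\}$ together with filter-guarded $\varepsilon$-transitions, so that $(i,j)\in\sem{\alpha(\varphi)}_T$ iff some run of $A_\alpha$ starting in the initial state at a cell of $\sem{\varphi}_T$ reaches $(i,j)$ in a final state. Because grid moves only go down and right, the set $\mathrm{Reach}(i,j)$ of automaton states reachable at $(i,j)$ obeys the recurrence: it is the filter/$\varepsilon$-closure at $(i,j)$ of the initial state (added when $(i,j)\in\sem{\varphi}_T$), the $\down$-successors of $\mathrm{Reach}(i-1,j)$, and the $\right$-successors of $\mathrm{Reach}(i,j-1)$. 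Thus $\mathrm{Reach}(i,j)$ depends only on the cell directly above and the cell directly to the left. I therefore maintain, for each subexpression and each of the $m$ columns, the state set of the previous row, updating it in place as the current row is scanned left to right while keeping the left neighbour in a constant-size register. Boolean, token, and $\textsf{root}$ subexpressions need only a single truth bit per cell, and all subexpressions are evaluated at each cell in topological order of the finite, well-founded subexpression DAG, so the nested predicates required by starts and filters are already available.

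Finally I would interleave content checking. For each rule $s\to c$ I keep an NFA $A_c$ for the content regular expression $c$ and simulate it across the cells of the current row selected by $s$: when processing $(i,j)$, after the frontier update has decided whether $(i,j)\in\sem{s}_T$, I advance the state set of $A_c$ on the token set of $(i,j)$ precisely for the selected cells, which is exactly the NFA simulation witnessing the relation $\regionmodels$ restricted to a single row. At each \newrow\ event I test whether $A_c$ is in an accepting state (rejecting otherwise), reset it, and promote the current-row frontier arrays to serve as the ``above'' information for the next row. The machine accepts iff all rows of all rules pass. Correctness follows from the locality claim (every selection decision is made with complete information) together with standard NFA-simulation correctness for $\regionmodels$. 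For space, the per-column frontier uses $O(m\cdot|R|)$ bits, the content automata use $O(|R|)$ bits, and the only quantities that grow with the data are a row counter and a column pointer, each $O(\log n)$; the total is $O(m\cdot|R|+\log n)\subseteq O(m\log n)$, in a single left-to-right pass.

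The main obstacle I anticipate is not any single step but getting the bookkeeping right so that per-column state genuinely suffices: one must verify that the $\mathrm{Reach}$ recurrence never needs anything beyond the immediate above/left neighbours, that the simultaneous evaluation of all subexpressions at a cell is free of circular dependencies (guaranteed by well-foundedness of the subexpression DAG), and that filters nested arbitrarily deeply inside stars are still covered by the locality claim. Pinning down the exact treatment of empty rows and of $\bot$-padded cells in the content check is a further detail to handle carefully, but it does not affect the memory bound.
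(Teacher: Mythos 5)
Your proposal is correct. The locality claim (membership of a cell in a forward expression depends only on its north-west rectangle) holds by exactly the induction you sketch, the grid recurrence for the reachable-state sets is sound because forward navigation moves only $\down$ and $\right$, and your space bound $O(m|R| + \log n)$ fits the weak-streamability budget. Your route differs from the paper's in its technical decomposition, though the underlying engine is the same: one automaton state set per column for the selectors, plus content NFAs that are checked and reset at each \newrow. The paper compiles each forward coordinate expression into a \emph{coordinate automaton}, an NFA over the event stream whose states can be \emph{suspended} on a column index (its mechanism for simulating $\down$ across rows) and which may invoke nested coordinate expressions as \emph{oracle} transitions; correctness and the space bound are then proved by induction on a nesting ``level'', running one space-bounded Turing machine per nested subexpression in parallel on auxiliary work tapes. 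You instead flatten everything into a single dynamic program, evaluating all subexpressions at each cell in topological order of the subexpression DAG, and you rest correctness on an explicit north-west-rectangle lemma that the paper leaves implicit. The two memory layouts essentially coincide --- a suspended state $q{:}\ell$ is precisely an entry of your per-column frontier --- but your organization avoids the level-stratified multi-tape simulation and gives a slightly tighter accounting ($O(m)$ versus $O(m\log m)$ bits per automaton, both within $O(m\log n)$), whereas the paper's coordinate-automaton abstraction buys a cleaner separation between the machine model and its space-bounded implementation.
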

\begin{proof}[sketch]
     Consider a rule $\varphi \to \rho$ with
  $\varphi$ a forward coordinate expression and $\rho$ a content
  expression. We will show that  coordinate
  expressions $\varphi$ can be evaluated in a streaming fashion by constructing a  special kind of finite state automaton (called \emph{coordinate automaton}) that allows us to decide, at each position in the
  event stream, if the currently visited cell is in
  $\sem{\varphi}_{T}$. Whenever we find that this is the case, we
  apply the current cell contents to $\rho$ (which we also evaluate by
  means of a finite state automaton). Now observe that $T\models
  \varphi \to \rho$ iff (1) under the row based semantics, whenever we
  see $\newrow$, the automaton for $\rho$ is in a final state and (2)
  under the region-based semantics, when we reach the end of the event
  stream, the automaton for $\rho$ is in a final state. We then obtain
  weak streamability by showing that coordinate automata for $\varphi$
  can be simulated in space $O(m \log(n))$, whereas it is known that
  the finite state automaton for $\rho$ can be simulated in constant
  space.
\end{proof}
\smallskip\noindent {\bf Strong streamability.} Forward \corechisel is
not strongly streamable as no schema with a rule that
contains subexpressions of the form $\col(a)$ (which are prevalent in
Section~\ref{sec:examples}) can be strongly streamable. This can be
seen using a simple argument from communication complexity. Indeed, assume
that the first row has $k$ cells, some of which have the token $a$ and
some of which do not. If we want to evaluate $\col(a)$ in a streaming
fashion, we need to identify the cells in the second row that are in
the same columns as the $a$-tokens in the first row.  But, this is
precisely the equality of two $k$-bit strings problem, which
requires $\Omega(k)$ bits in deterministic communication complexity
(Example 1.21 in \cite{KushilevitzN}). These $\Omega(k)$ bits are what
we need to store when going from the first to the second row. Since
$k$ can be $\Theta(n)$, this amount of space is more than we allow for
strongly streamable tabular schemas, and hence no schema containing
$\col(a)$ is strongly streamable.

The underlying reason why $\col(a)$ is not strongly streamable is because, in
general, the token $a$ can occur arbitrarily often. However, in all such
cases in Section~\ref{sec:examples} and in the W3C use cases, the occurrences of $a$ are
very restricted. We could therefore obtain strong
streamability for such expressions by adding constructs in the
language that restrict how certain tokens can appear:

\centerline{\texttt{unique(a)} \qquad \qquad \texttt{unique-per-row(a)}}

\noindent The former asserts that token $a$ should occur only once in the
whole table and the latter that $a$ occurs at most
once in each row. More formally, the former predicate
holds in a table $T$ if $\sem{a}_T$ contains at most one element and
the latter holds in table $T$ if $\sem{a}_T$ contains at most one
element of the form $(r,c)$ for each row number $r$. Notice that a
strong streaming algorithm can easily check whether these
predicates hold.

We use the above predicates to define two notions of
\emph{guardedness} for region selection expressions. Guarded formulas
will be strongly streamable. We say that token $a$ is
\emph{row-guarded} if {\tt unique-per-row(a)} appears in the
schema. If {\tt unique(a)} appears in the schema it is, in addition,
also \emph{guarded}. The two notions of guardedness capture
the following intuition: if $\varphi$ is row-guarded, then 
$\down(\varphi)$ is strongly streamable and if it is guarded, then
$\down^*(\varphi)$ is strongly streamable. The main idea is that, in
both cases, the number of cells we need to remember when going from
one row to the next does not depend on the width of the table.
We now
define (row)-guardedness inductively on the forward language:
\begin{compactitem}
\item $\textsf{root}$ and $\true$ are guarded and row-guarded;
\item $\right^*(\varphi)$ is guarded and row-guarded for every
$\varphi$ that does not contain a navigational subexpression;
\item if $\varphi$, $\psi$, $\alpha(\varphi)$, $\beta(\varphi)$ are
  guarded (resp., row-guarded), then 
  \begin{compactitem}
  \item $\varphi \land \psi$,  $\varphi \lor \psi$, $\varepsilon(\varphi)$, $\down(\varphi)$, $\right(\varphi)$, 
  \item $(\alpha\cdot\down)(\varphi)$, $(\alpha\cdot\right)(\varphi)$, and $(\alpha + \beta)(\varphi)$
\end{compactitem}
are guarded (resp.,  row-guarded);
\item if $\varphi$ and $\alpha(\psi)$ are guarded then
  $\down^*(\varphi)$ and  $(\alpha\cdot\down^*)(\psi)$ are guarded; and
\item if $\varphi$ and $\alpha(\psi)$ are row-guarded then
  $\right^*(\varphi)$ and $(\alpha\cdot \right^*)(\psi)$ are guarded.
\end{compactitem}

\begin{definition}\upshape
  A forward core-\chisel schema is \emph{guarded}, if all region
  selection expressions that use the $\down$-operator are row-guarded
  and all region selection expressions that use $\down^*$ are
  guarded.
\end{definition}
Notice that guardedness of a \chisel schema can be tested in linear
time. Furthermore notice that every \chisel schemas in this paper
becomes strongly streamable if we add the predicates {\tt unique(a)}
for tokens $a$ that we use in expressions using $\col$, $\down$, or
$\down^*$.

\begin{theorem}
  \label{thm:forward-guarded-strong-streaming}
  \label{THM:FORWARD-GUARDED-STRONG-STREAMING}
  Guarded forward core-\chisel is strongly streamable.
\end{theorem}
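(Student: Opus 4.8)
The plan is to reuse the streaming construction behind Theorem~\ref{theo:weak-stream} and to show that, on a guarded schema, the configuration that its coordinate automaton must carry across a row boundary shrinks from $O(m\log n)$ to $O(\log n)$. First I would observe that the within-row work is never the bottleneck: processing a forward coordinate expression left-to-right along a single row, the truth value at the current cell of each subexpression of the form $a$, $\true$, $\textsf{root}$, $\varphi\land\psi$, $\varphi\lor\psi$, $\lnot\varphi$, $\varepsilon(\varphi)$, $\right(\varphi)$ and $\right^*(\varphi)$ can be tracked with a constant number of Boolean flags (for $\right$ one remembers the previous cell's truth value, for $\right^*$ a single ``seen-so-far'' flag). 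Thus, apart from looking up the truth value of each maximal $\down$- or $\down^*$-subexpression at the current cell, the whole expression is evaluated with $O(1)$ flags per subexpression. The sole genuine dependency on earlier rows is therefore the evaluation of subexpressions $\down(\varphi)$ and $\down^*(\varphi)$, i.e.\ answering, while streaming row $i$, the membership queries ``is $(i-1,j)\in\sem{\varphi}_T$?'' (for $\down$) and ``is $(i',j)\in\sem{\varphi}_T$ for some $i'<i$?'' (for $\down^*$) using memory independent of $m$.

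To do this I would summarise, for each relevant subexpression $\varphi$, the previous row's column set $\mathsf{Row}_{i-1}(\varphi):=\{\,j\mid (i-1,j)\in\sem{\varphi}_T\,\}$ and the cumulative column set $\mathsf{Cum}_{i-1}(\varphi):=\bigcup_{i'\le i-1}\{\,j\mid (i',j)\in\sem{\varphi}_T\,\}$. The key lemma I would prove is a \emph{compactness} statement: if $\varphi$ is row-guarded then $\mathsf{Row}_i(\varphi)$ is, for every $i$, a union of at most $d_\varphi$ intervals of $[m]$, and if $\varphi$ is guarded then $\mathsf{Cum}_i(\varphi)$ is a union of at most $d_\varphi$ such intervals, where $d_\varphi$ depends only on $\varphi$ (hence, for a fixed schema, on a constant). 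Since each interval is described by two column indices in $[m]\subseteq[n]$, such a summary occupies $O(d_\varphi\log n)=O(\log n)$ bits, and there are only constantly many subexpressions, so the total cross-row state is $O(\log n)$. I would also check that these summaries are maintainable incrementally in a single pass: $\mathsf{Row}_i(\varphi)$ is recomputed while reading row $i$ (using the already-stored previous-row and cumulative summaries of its subexpressions to resolve the inner $\down$/$\down^*$ lookups), and $\mathsf{Cum}_i(\varphi)=\mathsf{Cum}_{i-1}(\varphi)\cup\mathsf{Row}_i(\varphi)$ is obtained by a single union.

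The compactness lemma I would prove by induction on the definition of (row-)guardedness. The base cases are immediate: $\true$ contributes the full width $[1,m]$ (one interval), $\textsf{root}$ the single point $\{1\}$, a token carrying \texttt{unique}/\texttt{unique-per-row} contributes a set that is nonempty in at most one cell of the whole table (resp.\ of each row), and $\right^*(\varphi_0)$ with $\varphi_0$ navigation-free contributes in every row the \emph{suffix} $[t,m]$ starting at the first $\varphi_0$-column $t$. For the inductive constructors $\land,\lor,\varepsilon,\down,\right,(\alpha\cdot\down),(\alpha\cdot\right),(\alpha+\beta)$ I would use that a family of ``unions of $O(1)$ intervals'' is closed under intersection, union and unit shifts, so the interval count grows only by a constant per constructor. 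The hard part will be the accumulating constructor $\down^*$, since $\mathsf{Cum}_i(\varphi)=\bigcup_{i'\le i}\mathsf{Row}_{i'}(\varphi)$ ranges over $\Theta(n)$ rows, and a naive bound would allow $\Theta(n)$ intervals (for instance, if each row contributed a different single column). The resolution is a structural invariant forced by guardedness: because $\lnot$ occurs in none of the guarded inductive rules, and because a bare token becomes guarded only through the $\right^*(\varphi_0)$ base case, \emph{every unbounded-occurrence token inside a guarded expression is enclosed in some $\right^*$}; hence its per-row contribution is always a suffix, never a scattered point set. Suffixes are closed under $\land$ and $\lor$ and, crucially, their union over all rows is again a single suffix (the one starting at the minimal threshold ever seen), while the only point-shaped contributions come from \texttt{unique}-guarded tokens and $\textsf{root}$, which are nonempty in $O(1)$ rows in total. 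Consequently the accumulation in $\down^*$ cannot manufacture more than $O(1)$ intervals, and $\mathsf{Cum}_i(\varphi)$ stays compact.

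Finally I would assemble the algorithm exactly as in the weakly-streamable case: run the constant-space finite automaton for the content expression $\rho$ alongside the coordinate summaries, feeding it the tokens of those cells that the summaries report as selected, and---under the row-based semantics---verify that this automaton is in a final state whenever a $\newrow$ event arrives, resetting it for the next row. The only change from Theorem~\ref{theo:weak-stream} is that the coordinate state is now $O(\log n)$ rather than $O(m\log n)$, so the Turing machine uses $O(\log n)$ work-tape space on a single left-to-right pass and accepts iff $T\models R$. This is precisely strong streamability.
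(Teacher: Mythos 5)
Your proposal is correct and follows essentially the same route as the paper: reuse the weak-streaming algorithm of Theorem~\ref{theo:weak-stream} and show, by induction on the definition of (row-)guardedness, that the state carried across row boundaries is schema-bounded rather than width-bounded. Your ``union of at most $d_\varphi$ intervals'' compactness lemma is precisely the paper's Lemma~\ref{lem:strong-streaming-columns} (per-row and cumulative selected cells consist of $2^{O(|\varphi|)}$ points plus, optionally, one right-open interval), and your interval summaries correspond exactly to the paper's stored pairs $(c,=)$ and $(c,\geq)$.
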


\makeatletter{}\section{\tdsltitle\ Extensions}\label{sec:extensions}

Next, we describe a number of extensions to \tdsl. These include 
alternative grouping semantics, types, complex content cells, and a concept for a transformation language.

\subsection{Region semantics}
\makeatletter{}
The examples in Section~\ref{sec:examples} all use a row-based
semantics of \tdsl where the content expression is matched over every
row in the selected region. That is, the
cells in the selected region are `grouped by' the row they occur
in. There are of course other ways to group cells, by
column, for instance, or by not grouping them at all.
The latter case is already defined in Section~\ref{sec:formalmodel} as
\emph{region-based semantics}.  In \tdsl, we indicate rules using this semantics with a double arrow {\tt =>} rather than a single arrow.  
Notice the difference
between the rules {\tt col(2) -> Null | Number} and {\tt col(2) =>
  (Null | Number)*}. Both require each cell in the second column to be
empty or a number but express this differently. (The former way is
closer to how one defines the schema of a table in SQL, which is why
we chose it as a default.)
Example~\ref{ex:PDB}
describes a more realistic application of {\tt =>}-rules. This example
corresponds to use case 12 in \cite{w3c-csv-usecases}, is called
\emph{``Chemical Structures''} and aims to interpret Protein Data Bank
(PDB) files as tabular data. This particular use case is interesting
because it illustrates that the view of W3C on tabular data is not
restricted to traditional comma-separated values files.
We note that Theorems~\ref{theo:evaluation}, \ref{theo:weak-stream}, and \ref{thm:forward-guarded-strong-streaming} still hold if \tdsl
schemas contain both rules under row-based and region-based semantics.

\begin{figure*}
\begin{framed}
  \vspace{-2mm}
\begin{verbatim}
HEADER    EXTRACELLULAR MATRIX                    22-JAN-98   1A3I
TITLE     X-RAY CRYSTALLOGRAPHIC DETERMINATION OF A COLLAGEN-LIKE
TITLE    2 PEPTIDE WITH THE REPEATING SEQUENCE (PRO-PRO-GLY)
...
EXPDTA    X-RAY DIFFRACTION
AUTHOR    R.Z.KRAMER,L.VITAGLIANO,J.BELLA,R.BERISIO,L.MAZZARELLA,
AUTHOR   2 B.BRODSKY,A.ZAGARI,H.M.BERMAN
...
REMARK 350 BIOMOLECULE: 1
REMARK 350 APPLY THE FOLLOWING TO CHAINS: A, B, C
REMARK 350   BIOMT1   1  1.000000  0.000000  0.000000        0.00000
REMARK 350   BIOMT2   1  0.000000  1.000000  0.000000        0.00000
...
SEQRES   1 A    9  PRO PRO GLY PRO PRO GLY PRO PRO GLY
SEQRES   1 B    6  PRO PRO GLY PRO PRO GLY
SEQRES   1 C    6  PRO PRO GLY PRO PRO GLY
...
ATOM      1  N   PRO A   1       8.316  21.206  21.530  1.00 17.44           N
ATOM      2  CA  PRO A   1       7.608  20.729  20.336  1.00 17.44           C
ATOM      3  C   PRO A   1       8.487  20.707  19.092  1.00 17.44           C
ATOM      4  O   PRO A   1       9.466  21.457  19.005  1.00 17.44           O
ATOM      5  CB  PRO A   1       6.460  21.723  20.211  1.00 22.26           C
\end{verbatim}
\vspace{-5mm}
\end{framed}
\caption{Fragment of a PDB file.\label{fig:PDB}}
\end{figure*}

\begin{example}\label{ex:PDB}\upshape
  Figure \ref{fig:PDB} displays a slightly shortened version of the
  PDB file mentioned in use case 12 in \cite{w3c-csv-usecases}.
The corresponding \tdsl schema could contain the following rules:
\vspace{-2mm}
\begin{verbatim}
row(1) -> HEADER, Type, Date, ID
col(1) => HEADER, TITLE*, dots, EXPDATA, AUTHOR*, 
          dots, REMARK*, dots, SEQRES*, dots, ATOM*          
\end{verbatim}
\vspace{-2mm}
 The last rule employs the region semantics and 
specifies the order in which tokens in the first column should appear.
\hfill $\blacksquare$

\end{example}

\subsection{Token types}

  The PDB fragment in Figure \ref{fig:PDB} contains cells that have
  the same content but seem to have a different meaning.   It can be convenient to differentiate between cells by using
  \emph{token types} as follows:
  \vspace{-1mm}
\begin{verbatim}
%% Token types
%% left: name of the token type
%% right: region selection expression for token type
REMARK-Header <= down*[dots].down[REMARK]
REMARK-Comment <= down*[dots].down[REMARK].down
REMARK-Rest <= 
  down*[dots].down[REMARK].down.(down[REMARK])*
\end{verbatim}
  \vspace{-1mm}
  Note that we abbreviated rules of the form $\alpha(\textsf{root})$
  by $\alpha$. We denoted the concatenation operator of navigational
  expressions by ``{\tt .}''.
      \texttt{REMARK-Header} is the topmost cell containing \texttt{REMARK} in
  Figure \ref{fig:PDB}, \texttt{REMARK-Comment} is the one immediately below, and
  \texttt{REMARK-Rest} is the rest.  We can now use token types to write
  rules such as
  \vspace{-1mm}
\begin{verbatim}
row(REMARK-Header) -> ...
row(REMARK-Comment) -> ...
row(REMARK-Rest) -> ...
\end{verbatim}
  \vspace{-1mm}
  Token types do not add additional expressiveness to the language
  since one can simply replace \texttt{REMARK-HEADER} by
  \texttt{down*[dots].down[REMARK](root)} in the rule. But the ability to
  use different names for fields with the same content may be useful
  for writing more readable schemas. In this case, the names suggest
  that the block of remarks is divided into a header, some comment,
  and the rest.

\subsection{Transformations and Annotations}

While it is beyond the scope of this document to develop a
transformation language for tables, we argue that region selection
expressions can be easily employed as basic building blocks for a
transformation language aimed at transforming tables into a variety of
formats like, for instance, RDF, JSON, or XML (one of the scopes
expressed in \cite{w3cchartercsv}). Region selection
expressions are then used to identify relevant parts of a
table.

\noindent {\bf Basic Transformations.}  Consider Figure~\ref{fig:climate} (of
Example~\ref{ex:simple}) again, where we see that several columns have
the value $-99.00$. Since winter does not get this extreme in Uganda,
this value is simply a dummy which should not be considered when
computing, e.g., the average temperature in Uganda in 1935. Instead,
for the fragment of Figure~\ref{fig:climate}, it would be desirable to
only select the columns that do not contain $-99.00$. To do this,
we can simply define a new token and a new token type for the region
of the table we are interested in.
  \vspace{-1mm}
\begin{verbatim}
Useless-Temp = -99.00
  
%% Token type 
Useful <= col(1) or 
         (Temperature and not Useless-Temp) or 
         (row(1) and not up*(Useless-Temp))
\end{verbatim}
  \vspace{-1mm}
 The region defined by {\tt Useful} contains
  \vspace{-1mm}
\begin{verbatim}
       , ENTEBBE AIR
1935.04,       27.83
1935.12,       25.72
1935.21,       26.44
[...]
\end{verbatim}
  \vspace{-1mm}
 which could then be exported. Using simple for-loops we can iterate over rows, columns, or cells,
and compute aggregates. For example, 
  \vspace{-1mm}
\begin{verbatim}
Useful-values <= (Temperature and not Useless-Temp)

For each column c in Useful-values {
  print Average(c)
}
\end{verbatim}
  \vspace{-1mm}
 would output {\tt 25.65}, the average of the values below {\tt ENTEBBE
  AIR} in Figure~\ref{fig:climate}. The region defined by {\tt
  Useful-values} is a set of table cells, with coordinates. These
coordinates can be used to handle information column-wise in the
for-loop: It simply iterates over all column coordinates that are
present in the region. Iteration over rows or single cells would work analogously.

\noindent {\bf Namespaces, Annotations and RDF.}
Assume that we want to say that certain cells in Figure~\ref{ex:1:csv}
are geographical regions. To this end, the \tdsl schema could contain a
definition of a default namespace:

\noindent \texttt{namespace default = http://foo.org/nationalstats.csv}

\noindent \texttt{namespace x = [...]}

\noindent  Region selection expressions can then be used to specify which cells
should be treated as objects in which namespace. For example, the code fragment
\vspace{-1mm}
\begin{verbatim}
For each cell c in col(GeoArea) {
  c.namespace = default
}
\end{verbatim}
\vspace{-1mm}
 could express that each cell below {\tt GeoArea} is an entity in namespace
{\tt http://foo.org/nationalstats.csv}. So, the cell containing
{\tt England} represents the entity

\centerline{{\tt
  http://foo.org/nationalstats.csv:England,}}

\noindent similar for the cell
containing {\tt Wales}, etc. (Here we assume that {\tt .namespace} is
a predefined operation on cells.)

We can also annotate cells with meta-information (as is
currently being considered in Section 2.2 of \cite{w3c-tabular-data-model}). The code fragment
\vspace{-1mm}
\begin{verbatim}
For each cell c in col(GeoArea) {
  annotate c with "rdf:type dbpedia-owl:Place"
  annotate c with "owl:sameAs fbase:" + c.content
}
\end{verbatim}
\vspace{-1mm}
 (assuming appropriate namespace definitions for rdf, owl, etc.)  could
express that each cell below {\tt GeoArea} should be annotated with
{\tt rdf:type dbpedia-owl:Place} and, in addition, the 
{\tt England} cell with {\tt owl:sameAs fbase:England}, the 
{\tt Wales} cell with {\tt owl:sameAs fbase:Wales}, etc. We assume that
{\tt annotate}, {\tt with}, and {\tt .content} are reserved
words or operators in the language.

These ingredients also seem useful for exporting to RDF. We could
write, e.g.,

\noindent \texttt{print "@prefix : <http://foo.org/nationalstats.csv>"}

\begin{verbatim}
For each cell c in col(GeoArea) {
 print ":"+c.content+"owl:sameAs fbase:"+c.content
}
\end{verbatim}
\vspace{-1mm}
 to produce an RDF file that says that {\tt :England} in the default
namespace is the same as {\tt fbase:England}. 
Looking at Figure~\ref{ex:2:csv}, one can also imagine constructs like
\vspace{-1mm}
\begin{verbatim}
RDF <= col(subject) or col(predicate) or col(object)

For each row r in RDF {
  print r.cells[1] +" "+ r.cells[2] +" "+ r.cells[3]  
}
\end{verbatim}
\vspace{-1mm}
 to facilitate the construction of RDF triples taking content from
several cells.

\subsection{Complex content}

The CSV on the Web WG is considering allowing complex content (such as lists) in
cells (Section 3.8 in \cite{w3c-metadata-vocabulary}).  \tdsl can be
easily extended to reason about complex content. Our formal definition of
tabular documents already considers (Section~\ref{sec:formalmodel}) a
finite set of delimiters, which goes beyond the two delimiters (row-
and column-) that we used until now.

In a spirit similar to region-based semantics, one can also imagine a
subcell-based semantics, for example, a rule of the form
\vspace{-1mm}
\begin{verbatim}
col(1) .> (String)*
\end{verbatim}
\vspace{-1mm}
 could express that each cell in the first column contains a list of
Strings. Notice the use of {\tt .>} instead of {\tt ->} to denote that
we specify the contents of each individual cell in the region, instead
of each row. The statement {\tt List Delim = ;} in the beginning of
the schema could say that the semicolon is the delimiter for lists
within a cell.

\makeatletter{}\section{Conclusions}

We presented the schema language \tdsl for tabular data on the Web
and showcased its flexibility and usability through a wide range of
examples and use cases. While region selection expressions
are at the very center of \tdsl, we think they can be more broadly 
applied.  Region selection expressions can be used, for instance, as a cornerstone for annotation- and transformation languages for tabular 
data and thus for a principled approach for integrating such data into the Semantic Web. The whole approach of \tdsl is strongly rooted in theoretical foundations and, at the same time, in well established technology such as XPath. For these reasons, we expect the language to be very robust
and, at the same time, highly accessible for users. Two prominent directions for future work are the following: (1) expand the usefulness of \tdsl by further exploring the extensions in Section~\ref{sec:extensions}; and, (2) study static analysis problems related to \tdsl and region selector expressions leveraging on the diverse box of tools from formal language theory
and logic.

\section*{Acknowledgments}
We thank Marcelo Arenas for bringing \cite{w3c-tabular-data-model} to our attention.

\bibliographystyle{abbrv}

\newpage

\appendix
\makeatletter{}
\section{Proof of Theorem~\ref{theo:evaluation}}
\label{sec:proof-theorem-evaluation}

\begin{lemma}[See Fact 5.1, \cite{LibkinMV-icdt13}]\label{lem:PDL}
  For every coordinate expression $\varphi$, we can compute
  $\sem{\varphi}_T$ in time $O(|T||\varphi|)$. For every set $C$ of
  coordinates in $T$ and every navigational expression
  $\alpha$, we can compute $\sem{\alpha(C)}_T$ in time $O(|T||\alpha|)$.
\end{lemma}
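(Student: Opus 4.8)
The plan is to observe, as the paper already remarks, that coordinate expressions and navigational expressions are precisely PDL state formulas and programs interpreted over the labeled graph $G_T$ whose vertices are the cells $\coords(T)$, whose edge relations are the four adjacencies $\up$, $\down$, $\left$, $\right$ (each of size $O(|T|)$), and where a vertex $(i,j)$ carries the propositions $T_{i,j} \subseteq \Delta$. Since $|G_T| = O(|T|)$, the statement is an instance of linear-time PDL model checking (cf.\ \cite{CleavelandS93,AI00} and Fact 5.1 of \cite{LibkinMV-icdt13}). I would establish both claims simultaneously by mutual structural induction, representing every computed region as a Boolean array indexed by $\coords(T)$. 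With this representation, union, intersection, complement, and the four one-step shifts cost $O(|T|)$, while membership tests cost $O(1)$.

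The coordinate-expression cases are then immediate: $\sem{a}_T$, $\sem{\textsf{root}}_T$, and $\sem{\true}_T$ are read off in a single pass, the Boolean connectives become the corresponding array operations, and $\sem{\alpha(\varphi)}_T$ is obtained by first computing $\sem{\varphi}_T$ (by induction) and then applying the navigational procedure below to this set. For $\sem{\langle \alpha \rangle}_T = \{c \mid \sem{\alpha(\{c\})}_T \neq \emptyset\}$ I would avoid evaluating a separate relation per cell by reversing the expression: writing $\bar\alpha$ for the expression obtained by swapping $\up \leftrightarrow \down$ and $\left \leftrightarrow \right$ and reversing concatenations (this leaves tests unchanged and does not increase the size), a cell $c$ has an outgoing $\alpha$-path iff $c$ lies in the forward image of the full coordinate set under $\bar\alpha$. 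Hence $\sem{\langle\alpha\rangle}_T = \sem{\bar\alpha(\coords(T))}_T$, which reduces to the navigational case.

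For the navigational part I would compute the forward image $\sem{\alpha(C)}_T$ of a given set $C$. The cases $\varepsilon$, the four axes, $[\varphi]$ (compute $\sem{\varphi}_T$ and intersect with $C$), $\alpha \cdot \beta$ (apply them in sequence), and $\alpha \mid \beta$ (union the two images) each cost $O(|T|)$ on top of their subexpressions, so they sum to $O(|T||\alpha|)$. The hard part is $\alpha^*$: the naive fixpoint iteration $Y \mapsto C \cup \sem{\alpha(Y)}_T$ may need up to $|T|$ rounds to stabilize, yielding a quadratic $O(|T|^2|\alpha|)$ bound. To recover linear time I would recast the forward image as graph reachability: first recursively evaluate all tests $[\varphi]$ occurring in $\alpha$ (total $O(|T||\alpha|)$ by induction), then build a Thompson-style automaton for $\alpha$ over the alphabet of axes, with test-guarded states, and form its product with $G_T$. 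This product has $O(|T||\alpha|)$ vertices and $O(|T||\alpha|)$ edges, and $\sem{\alpha^*(C)}_T$ (indeed $\sem{\alpha(C)}_T$ for arbitrary $\alpha$) is exactly the projection onto $\coords(T)$ of the product vertices reachable from the seed $C \times \{q_0\}$ that carry a final automaton state. A single traversal therefore computes it in time linear in the product, i.e.\ $O(|T||\alpha|)$. Summing the per-subexpression costs against the parse tree of the top-level expression gives the claimed $O(|T||\varphi|)$ and $O(|T||\alpha|)$ bounds; the main obstacle, and the only place requiring more than a textbook argument, is precisely this product-reachability treatment of the Kleene star.
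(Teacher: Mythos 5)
Your proposal is correct and takes essentially the same approach as the paper: the paper proves this lemma by deferring to linear-time PDL/graph-XPath model checking over the table viewed as a grid-shaped Kripke structure (Fact 5.1 of Libkin--Martens--Vrgo\v{c}, together with the algorithms of Alechina--Immerman and Cleaveland--Steffen), which is exactly the bottom-up, automaton--graph product-reachability argument you spell out. Your write-up simply makes explicit the details the paper delegates to the citation, notably the Boolean-array induction, the reversal trick for $\langle\alpha\rangle$, and the Thompson-product treatment of the Kleene star.
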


\begin{proof}[of Theorem~\ref{theo:evaluation} (sketch)]
  As we have already mentioned, our region expressions are essentially
  a variant of Propositional Dynamic Logic (PDL), tweaked to navigate
  in tables. It is known that PDL has linear time combined complexity
  for global model checking \cite{AI00,CleavelandS93}. That is, given
  a PDL formula $\varphi$ and a Kripke structure (essentially, a
  graph) $G$, one can decide in time $O(|\varphi||G|)$ whether $G
  \models \varphi$. One can simply view our tables as Kripke
  structures that are shaped like a grid. The result follows by a
  straightforward adaptation of the algorithm in \cite{AI00}.
\end{proof}

\makeatletter{}\newcommand{\act}{\textnormal{\textit{Act}}}
\newcommand{\susp}{\textnormal{\textit{Susp}}}

\section{Proof of Theorem~\ref{theo:weak-stream}}
\label{sec:proof-theorem}

The crux behind Theorem~\ref{theo:weak-stream} is the
following. Consider a rule $\varphi \to \rho$ with $\varphi$ a forward
coordinate expression and $\rho$ a content expression. We will show that
we are able to evaluate coordinate expressions $\varphi$ in a
streaming fashion by constructing a special kind of finite state
automaton (called \emph{coordinate automaton}) that allows us to
decide, at each position in the event stream, if the currently visited
cell is in $\sem{\varphi}_{T}$. Whenever we find that this is the
case, we apply the current cell contents to $\rho$ (which we also
evaluate by means of a finite state automaton). Now observe that
$T\models \varphi \to \rho$ iff
\begin{enumerate}
\item Under the row based semantics, the automaton for $\rho$ is in a
  final state whenever we see $\newrow$.
\item Under the region-based semantics, when we reach the end of the
  event stream, the automaton for $\rho$ is in a final state.
\end{enumerate}

\subsection{Coordinate automata}
\label{sec:coordinate-automaton}

Let us first introduce the kinds of finite state automata that we will
use to evaluate coordinate expressions. A \emph{coordinate automaton}
(CA for short) $A$ over $\Lambda$ is a tuple $(Q, q_0, F, \delta)$
where:
\begin{itemize}
\item $Q$ is a finite set of states;
\item $q_0 \in Q$ is an initial state;
\item $F \subseteq Q$ is a set of \emph{final states};
\item $\delta$ is a finite transition relation consisting of triples,
  each having one of the forms $(q, \varepsilon, q')$, $(q, [\varphi],
  q')$, $(q, \right, q')$, $(q, \down, q')$, $(q, \newrow, q')$, with
$\varphi$ a \emph{forward coordinate expression}.
\end{itemize}
A CA processes event streams of tokenized tables. It largely behaves
like a normal non-deterministice finite state automaton, but has a
number of special features. First, it is equipped with a register that
stores the coordinate of the current cell being processed. Second, in
each step during execution, the automaton can either be in an
\emph{active} state, or in a \emph{suspended state}.  Only active
states can fire new transitions; suspended states remain dormant until
they become active again. Syntactically, a suspended state is simply a
pair of the form $q \colon \ell$ with $q \in Q$ and $\ell \in \nat$, $\ell \geq
1$. The semantics of a suspended state $q\colon \ell$ is that the
automaton keeps reading events from the input stream (updating the
coordinate register, but remaining in the suspended state) until the
next time that we visit a cell in column $\ell$, when the state becomes
active again. This way, the coordinate automaton can simulate moving
downwards in the table. In particular, when following a transition of
the form $(q, \down, q')$ from state $q$, the automaton records the
column number of the current position (say, $\ell$); moves to state $q'$;
but immediately suspends $q'$ until the following cell in column $\ell$
is visited. Third and finally, a CA can check whether the current cell
is selected by a forward coordinate expression $\varphi$ by means of a transition of the
form $(q, [\varphi], q')$. The CA moves from $q$ to $q'$ if the
corresponding cell in the table is selected by $\varphi$. (Note that
the CA can hence use coordinate expressions as oracles.) In contrast
to standard automata, these transitions do not cause the CA to move to
the next event, however. Moving to the next event is done by either
transitions of the form $(q, \right, q')$ or $(q, \newrow,
q')$. 

\smallskip \noindent {\bf Note.} Coordinate automata, as defined above, are very
general and powerful. Obviously, the fact that a coordinate automaton
can use coordinate expressions as oracles makes them very
powerful. Indeed, as we will see below, they are almost trivially able
to express the semantics of coordinate expressions. In order to obtain
Theorem~\ref{theo:weak-stream}, however, we will compile coordinate
expressions into CA whose oracles are restricted.

\smallskip\noindent
{\bf Semantics.} Formally, a \emph{configuration} of $A$ is a tuple
$(i,k,\ell,\varsigma)$ where $i$ is a pointer to the current event being
read from the input tape; $(k,\ell)$ is a coordinate; and $\varsigma$ is
either an element of $Q$ (an active state), or a pair of the form
$q\colon j$ with $q \in Q$ and $j \in \nat$, $j \geq 1$ (a suspended
state).

Let $T$ be a table and let $s = \sigma_1 \cdots \sigma_n$ be an event
stream for $T$, each $\sigma_i$ being either $\event{\Gamma}$ with
$\Gamma$ a set of tokens, or $\newrow$. Let $c_0 = (i,k,\ell,\varsigma)$ be a
configuration, with $1 \leq i \leq n$ and $(k,\ell)$ the coordinate of
the cell visited in $T$ when $\sigma_i$ is emitted. A \emph{run}
$\rho$ of $A$ on $s$ starting at $c_0$ is a sequence $c_0,\dots, c_m$
of configurations such that, for all $j$ with $1 \leq j < m$ one of
the following holds for $c_j = (i_j,k_j,\ell_j,\varsigma_j)$ and $c_{j+1}
= (i_{j+1}, k_{j+1},\ell_{j+1},\varsigma_{j+1})$:
\begin{itemize}
\item Transition from active state: $\varsigma_j \in Q$
\begin{enumerate}
\item $(\varsigma_j, \epsilon, \varsigma_{j+1}) \in \delta$,
  $i_{j+1} = i_{j}$, $k_{j+1}=k_{j}$,
  and $\ell_{j+1}=\ell_{j}$ (epsilon transition)
\item $(\varsigma_j, [\varphi], \varsigma_{j+1}) \in \delta$,
  $(k_j,\ell_j) \in \sem{\varphi}_T$, $i_{j+1} = i_{j}$, $k_{j+1}=k_{j}$,
  and $\ell_{j+1}=\ell_{j}$ (ordinary transition)
\item $(\varsigma_j, \right,
  \varsigma_{j+1}) \in \delta$, $\sigma_{{i_j}+1} \not = \newrow$, $i_{j+1} = i_{j}+1$,
  $k_{j+1}=k_{j}$, and $\ell_{j+1}=\ell_{j}+1$ (move right transition)
\item $(\varsigma_j, \newrow, \varsigma_{j+1}) \in \delta$,
  $\sigma_{i_j+1} = \newrow$, $i_{j+1} =
  i_{j}+2$, $k_{j+1}=k_j+1$, and $\ell_{j+1}=1$ (new row transition)
\item $(\varsigma_j, \down,
  q) \in \Delta$, $i_{j+1} = i_{j}$, $k_{j+1}=k_{j}$, $\ell_{j+1}=\ell_{j}$,
  and $\varsigma_{j+1} = q\colon \ell_j$,  (down transition)
\end{enumerate}
\item Transition from suspended state:  $\varsigma_j = q\colon
  \ell$ for some $q \in Q$ and $\ell \in \nat$ with $\ell \geq 1$.
  \begin{enumerate}
  \item $\sigma_{i_j+1} \not = \newrow$, $i_{j+1} = i_{j}+1$,
    $k_{j+1}=k_{j}$, $\ell_{j+1}=\ell_{j}+1$, and either (a) $\ell \not =
    \ell_{j+1}$ and $\varsigma_{j+1} = \varsigma_j$ (suspended right
    transition) or (b) $\ell = \ell_{j+1}$ and $\varsigma_{j+1} = q$ (wake
    up).
  \item $\sigma_{i_j+1} = \newrow$, $i_{j+1} = i_{j}+2$,
    $k_{j+1}=k_j+1$, $\ell_{j+1}=1$, and either (a) $\ell \not = 1$ and
    $\varsigma_{j+1} = \varsigma_j$ (suspended new row transition) or (b) $\ell
    = 1$ and $\varsigma_{j+1} = q$ (wake up).
\end{enumerate}
\end{itemize}

\begin{definition}
  Let $T$ be a tokenized table. A CA $A$ over $\Lambda$ is said to
  \emph{select} coordinate $(k,\ell) \in coords(T)$, denoted $T,
  (k,\ell) \models A$, if, there exists a run $\rho$ of $A$ on the
  event stream for $T$ starting from configuration $(1,1,1,q_0)$ with
  $q_0$ the initial state of $A$ such that there is a configuration
  $c$ in $\rho$ with $c = (i,k,\ell, \varsigma)$ for some $i$ and
  $\varsigma$ such that $\varsigma \in F$. (In particular, $\varsigma$
  is not suspended.)
\end{definition}
We write
$\sem{A}_T$ for the set of all coordinates selected by $A$ on $T$.

\begin{example}\upshape
  The following CA selects the same coordinates as region expression $\down^* \right^+ (\text{\texttt{ARUBA}} )$
\begin{center}
\begin{tikzpicture}[>=stealth,thick,baseline=0]
  \node (init)  at (0,0)   [state,initial] {\ };
  \node (a)     at (2,0)   [state] {\ };
  \node (final) at (4,0)   [state,accepting] {\ };

  \draw[->] (init) -- node[above] {$[$\textnormal{\texttt{ARUBA}}$]$} (a);
  \draw[->] (a)    -- node[above] {$\right$} (final);
  
  \path (init) edge[loop above] node {$\newrow$} (init);
  \path (init) edge[loop below] node {$[\true]$} (init);
  \path (a) edge[loop above] node {$\down$} (a);
  \path (final) edge[loop above] node {$\right$} (final);

                      \end{tikzpicture}
\end{center}
\hfill $\blacksquare$
\end{example}

Note that, in the definition above, a CA always selects coordinates
when started from the root coordinate (1,1) of the table (i.e., the
beginning of the event stream). In what follows, it will be convenient
for technical reasons to say that a $CA$ $A$ selects coordinate
$(k,\ell)$ in $T$ when started at coordinate $(k',\ell')$. This is formally
defined as follows. 

\begin{definition}
  Let $s = \sigma_1 \cdots \sigma_n$ be the event stream of $T$. Let
  $\sigma_i$ with $1 \leq i \leq n$ be the symbol in this stream that
  corresponds to the cell in $T$ with coordinate
  $(k',\ell')$.\footnote{Here, we tacitly make the convention that
    \newrow events do not corresponds to any cell, so $\sigma_i$ is an
    event of the form $\event{\Gamma}$.} Then
  $A$ selects $(k,\ell)$ when started from $(k',\ell')$ in $T$ if
  there exists a run $\rho$ of $A$ on $\sigma_1 \cdots \sigma_n$
  starting with configuration $(i,k',\ell', q_0)$ such that there is a
  configuration $c$ in $\rho$ with $c = (j,k,\ell,\varsigma)$ for some
  $j$ and some $\varsigma \in F$. (In particular, $\varsigma$ is not
  suspended.)
\end{definition}
We write $\sem{A}_{T,c}$ for the set of all coordinates selected by $A$ on
$T$ when started from coordinate $c$.

\begin{definition}
  Coordinate automaton $A$ \emph{expresses} coordinate expression
  $\varphi$ if $\sem{\varphi}_T = \sem{A}_T$, for every tokenized
  table $T$. We say that $\varphi$ is \emph{definable} by means of a
  CA if there exists a CA that expresses $\varphi$. Similarly, if
  $\alpha$ is a navigational expression, then $A$ 
  \emph{expresses } $\alpha$ if $\alpha(\{c\}) = \sem{A}_{T,c}$, for
  every tokenized table $T$ and every $c \in \coords(T)$.
\end{definition}

It should be noted that the fact that a coordinate automaton can use
coordinate expressions as oracles makes them very powerful. Indeed,
they are almost trivially able to express any coordinate or
navigational expression. In order to obtain
Theorem~\ref{theo:weak-stream}, however, we will compile coordinate
expressions into CA whose oracles are restricted. By restricted here
we mean that the oracles are of a different \emph{level} than
coordinate or navigational expressions that are being expressed. The
level intuitively corresponds to the maximum nesting level of
coordinate and navigational expressions. The formal definition is as
follows.
\begin{align*}
  \level(a) & = 0 \\
  \level(\rootcoord) & = 0 \\
  \level(\true) & = 0 \\
  \level(\varphi \wedge \psi) &=  \max(\level(\varphi),\level(\psi)) \\
  \level(\varphi \vee \psi) & = \max(\level(\varphi),\level(\psi))  \\
  \level(\neg \varphi) & = \level(\varphi) \\
  \level(\alpha(\varphi)) & = 1 + \max(\level(\alpha),
  \level(\varphi))\\
\end{align*}
\begin{align*}
\level(\varepsilon) & = 0 \\
  \level(\down) & = 0 \\  
  \level(\right) & = 0 \\  
  \level([\varphi]) & = 1 + \level(\varphi) \\  
  \level(\alpha \cdot \beta) & = \max(\level(\alpha), \level(\beta)) \\  
  \level(\alpha + \beta) & = \max(\level(\alpha), \level(\beta)) \\  
  \level(\alpha^*) & = \level(\alpha) \\  
\end{align*}

To illustrate, $\level(\down \cdot \right^*) = 0$, $\level([a] \cdot
\right) = 1$, and $\level(\right (a \wedge b) ) = 1$.

Define the \emph{level} of a CA to be the maximum level of any
coordinate expression occurring in it. We now establish a number of
technical lemmas that relate CA to coordinate and navigational expressions.

\begin{lemma}
  \label{lem:ca-vs-navexpr}
  Every forward navigational expression $\alpha$ is definable by means
  of a CA of level $\max(0,\level(\alpha)-1)$.
\end{lemma}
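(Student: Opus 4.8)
The plan is to build, by structural induction on the forward navigational expression $\alpha$, a coordinate automaton $A_\alpha$ with a single initial and a single final state that expresses $\alpha$, mimicking the classical Thompson construction of an NFA from a regular expression but reading the CA move- and filter-transitions $\varepsilon$, $\down$, $\right$, $[\varphi]$ in place of alphabet symbols. For the base cases $\varepsilon$, $\down$, and $\right$ I would use the two-state automaton $q_0 \to q_f$ carrying the corresponding transition, with $q_f$ final. Unwinding the CA semantics shows, for instance, that the $\down$-automaton started from $c=(k',\ell')$ suspends at column $\ell'$ and becomes active again exactly at $(k'+1,\ell')$ (whenever that cell exists), so it selects $\sem{\down(\{c\})}_T$; the $\varepsilon$- and $\right$-automata are checked the same way. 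For a filter I take the transition $(q_0,[\varphi],q_f)$, which uses $\varphi$ as an oracle and selects $\{c\}\cap\sem{\varphi}_T = \sem{[\varphi](\{c\})}_T$.

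For the inductive cases I would glue the automata with fresh $\varepsilon$-transitions exactly as in Thompson's construction: serial composition for $\alpha\cdot\beta$, a branching initial and merging final state for $\alpha+\beta$, and a feedback $\varepsilon$-loop together with an accepting initial state for $\alpha^*$. The correctness of the gluing rests on two observations I would make explicit. First, because every transition updates the event pointer $i$ and the coordinate $(k,\ell)$ in lockstep, $\sigma_i$ is always the event of cell $(k,\ell)$; hence reaching the final state of $A_\alpha$ in a configuration $(i,k',\ell',q^\alpha_f)$ and $\varepsilon$-moving into the initial state of $A_\beta$ is, by definition, the same as \emph{starting} $A_\beta$ from $(k',\ell')$. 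Second, every forward navigational expression denotes an additive (union-preserving) function, i.e. $\beta(\bigcup_j C_j)=\bigcup_j\beta(C_j)$, which is immediate from the semantics. Together these give $\sem{A_{\alpha\cdot\beta}}_{T,c} = \bigcup_{c'\in\alpha(\{c\})}\beta(\{c'\}) = \beta(\alpha(\{c\})) = \sem{(\alpha\cdot\beta)(\{c\})}_T$, and the analogous identities for $\alpha+\beta$ and $\alpha^*$ (using finiteness of $T$ so that the star stabilises and every reachable coordinate is reached by a finite run).

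For the level bound I would note that the only coordinate expressions occurring as oracles in $A_\alpha$ are precisely the subexpressions $\varphi$ of the filters $[\varphi]$ appearing in $\alpha$. A routine induction shows that $\level$ is monotone along subexpressions, so each such filter satisfies $\level([\varphi])\le\level(\alpha)$ and hence $\level(\varphi)=\level([\varphi])-1\le\level(\alpha)-1$. If $\alpha$ contains no filter at all, then $\level(\alpha)=0$ and $A_\alpha$ has no oracle transitions. In either case the level of $A_\alpha$ is at most $\max(0,\level(\alpha)-1)$, as claimed.

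I expect the main obstacle to be the $\down$ operator and its interaction with serial composition and star: several sub-automata may be suspended and later woken at interleaved positions of the single event stream, and one must be sure that these suspensions resume in the right columns and rows. The pointer/coordinate invariant and the additivity remark are exactly what tame this, and I would therefore spell out the $\down$ base case and the $\alpha\cdot\beta$ case in full detail to keep the suspension bookkeeping airtight.
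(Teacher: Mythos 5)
Your proposal is correct and follows essentially the same route as the paper: a Thompson-style structural induction with the same two-state base automata for $\varepsilon$, $\down$, $\right$, $[\varphi]$, the same gluing constructions for $\alpha\cdot\beta$, $\alpha+\beta$, $\alpha^*$, and the same level bound obtained by observing that the only oracles are the filter subexpressions $\varphi$ with $\level(\varphi)=\level([\varphi])-1$. The paper dismisses correctness as ``routine to check'' where you spell out the pointer/coordinate invariant and additivity; your making the initial state of the $\alpha^*$ automaton accepting is in fact slightly more careful than the paper's construction, which only adds the feedback $\epsilon$-loop and thus glosses over the zero-iteration case.
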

\begin{proof}
  We construct, by induction on $\alpha$, a CA $A$ that expresses
  $\alpha$ and is of level at most $\level(\alpha)-1$. The
  construction is essentially the same as the Thompson construction
  for transforming regular expressions into finite state automata.
  \begin{itemize}
  \item If $\alpha = \varepsilon$, then $A$ is the CA with one state,
    which is both initial and final, and which does not have any
    transitions. 
  \item If $\alpha = \down$, then $A$ is the CA with states $q$ and
    $q'$, where $q$ is initial and $q'$ is final, with the single
    transition $(q, \down, q')$.
  \item If $\alpha = \right$, then $A$ is the CA with states $q$ and
    $q'$, where $q$ is initial and $q'$ is final, with the single
    transition $(q, \right, q')$.
  \item If $\alpha = [\varphi]$, then $A$ is the CA with states $q$
    and $q'$ where $q$ is initial and $q'$ is final, with the single
    transition $(q, [\varphi], q')$.
  \item If $\alpha = \alpha_1 \cdot \alpha_2$, then let $A_1$ be the
    CA constructed by induction for $\alpha_1$, and $A_2$ the CA
    constructed by induction for $\alpha_2$. Take $A = A_1 \cdot A_2$,
    the CA obtained by the usual concatenation construction on $A_1$
    with $A_2$. (That is, we take the disjoint union of $A_1$ and $A_2$,
    renaming states where necessary, and link the final states of
    $A_1$ to the initial state of $A_2$ by means of an
    $\epsilon$-transition. The initial state is the initial state of
    $A_1$; the final states are the final states of $A_2$.)
  \item If $\alpha = \alpha_1 + \alpha_2$, then let $A_1$ be the
    automaton constructed by induction for $\alpha_1$ and $A_2$ the automaton
    constructed for $\alpha_2$. Then take $A$ to $A_1 + A_2$, the
    automaton obtained by performing the usual union construction on
    automata. (Take their disjoin union, renaming states where
    necessary, add a new initial state and add epsilon transition from
    this state to the initial states of $B_1$ and $B_2$,
    respectively. The final states consists of the final states of
    $B_1$ and $B_2$.)
  \item If $\alpha = \beta^*$, then let $B$ be the automaton created
    for $\beta$. Let $A$ be the automaton we obtain by adding an
    $\epsilon$-loop from the final states of $B$ to its initial
    state. 
  \end{itemize}
  In all cases, it is now routine to check that $A$ defines $\alpha$
  and that $\level(A)  \leq \max(0,\level(\alpha) - 1)$.
\end{proof}

\begin{lemma}
  \label{lem:ca-vs-cexpr-alpha-phi}
  Every forward coordinate expression  of the form $\alpha(\varphi)$ 
   is definable by means of a CA of level at most
  $\level(\alpha(\varphi)) - 1$.
\end{lemma}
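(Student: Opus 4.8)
The plan is to reduce the claim to Lemma~\ref{lem:ca-vs-navexpr} by prepending to the navigational CA for $\alpha$ a small ``scanning'' gadget that nondeterministically locates a starting cell satisfying $\varphi$ and treats $\varphi$ as an oracle. Write the expression as $\alpha(\varphi)$ with $\alpha$ a forward navigational expression and $\varphi$ a forward coordinate expression. By Lemma~\ref{lem:ca-vs-navexpr} there is a CA $A_\alpha = (Q_\alpha, q_0^\alpha, F_\alpha, \delta_\alpha)$ of level $\max(0,\level(\alpha)-1)$ with $\sem{A_\alpha}_{T,c} = \alpha(\{c\})$ for every tokenized table $T$ and every $c \in \coords(T)$. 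Since navigational expressions denote union-preserving functions, $\sem{\alpha(\varphi)}_T = \alpha(\sem{\varphi}_T) = \bigcup_{c \in \sem{\varphi}_T}\alpha(\{c\})$, so it suffices to build a CA that, started at the root, reaches every cell $c$, checks whether $c \in \sem{\varphi}_T$, and if so hands control to a disjoint copy of $A_\alpha$ started at $c$.

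Concretely, I would add a new, non-accepting initial state $q_{\mathrm{scan}}$ carrying the two self-loops $(q_{\mathrm{scan}}, \right, q_{\mathrm{scan}})$ and $(q_{\mathrm{scan}}, \newrow, q_{\mathrm{scan}})$, together with a single \emph{commit} transition $(q_{\mathrm{scan}}, [\varphi], q_0^\alpha)$ into the copy of $A_\alpha$; the final states of the resulting CA $A$ are exactly $F_\alpha$. Because the table is rectangular, the event stream is a regular alternation of $m$ cell events and \newrow, so the two scanning loops suffice to drive $q_{\mathrm{scan}}$, along some run, to every coordinate of $T$ while keeping the coordinate register equal to the cell currently read. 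At a coordinate $c$ the commit transition is enabled precisely when $c \in \sem{\varphi}_T$, and, crucially, being an $[\varphi]$-style transition it neither advances the input pointer nor changes the register. Hence after committing at $c$ the configuration is exactly the start configuration $(i,k,\ell,q_0^\alpha)$ demanded by the definition of $\sem{A_\alpha}_{T,c}$.

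For correctness, any run of $A$ reaching an accepting configuration at a coordinate $c'$ decomposes uniquely into a scanning prefix ending at some cell $c$, a commit step (witnessing $c \in \sem{\varphi}_T$), and a run of $A_\alpha$ from $c$ reaching $c' \in F_\alpha$, i.e.\ $c' \in \alpha(\{c\})$; conversely every such pair $c, c'$ is realized by a run. Thus $\sem{A}_T = \bigcup_{c \in \sem{\varphi}_T}\alpha(\{c\}) = \sem{\alpha(\varphi)}_T$. For the level, the only coordinate expressions occurring in $A$ are $\varphi$ (from the commit transition) and those already present in $A_\alpha$, while the scanning loops carry only $\right$ and \newrow. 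Therefore $\level(A) = \max\bigl(\level(\varphi),\, \level(A_\alpha)\bigr) = \max\bigl(\level(\varphi),\, \max(0,\level(\alpha)-1)\bigr) \le \max(\level(\varphi),\level(\alpha)) = \level(\alpha(\varphi)) - 1$, as required.

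The main obstacle — and the reason for phrasing the bound via $\level$ — is the accounting in this last step: one must resist \emph{inlining} $\varphi$ and instead keep it as an opaque oracle on a single $[\varphi]$ transition, so that it contributes only $\level(\varphi)$ rather than the level of its navigational subexpressions. This black-box treatment is exactly what guarantees $\level(A) \le \level(\alpha(\varphi)) - 1$ and enables the later inductive compilation of arbitrary forward coordinate expressions into bounded-level CA. A secondary point to check with care is that $q_{\mathrm{scan}}$ truly reaches each cell in the intended configuration; this relies on rectangularity together with the precise semantics of the \newrow transition (which lands on the first cell of the next row), ensuring no cell is skipped and that the register stays synchronized with the event being read.
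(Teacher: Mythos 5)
Your proof is correct and in substance identical to the paper's: the paper rewrites $\alpha(\varphi)$ as $\beta(\rootcoord)$ with $\beta = (\right + \down)^* \cdot [\varphi] \cdot \alpha$ and invokes Lemma~\ref{lem:ca-vs-navexpr} on $\beta$, whose Thompson construction yields exactly your scan--commit--run automaton, with the same level bookkeeping ($\varphi$ kept as an opaque oracle on a single $[\varphi]$-transition, so that it contributes $\level(\varphi)$ rather than $\level(\varphi)+1$). Your only deviations are cosmetic: you build the scanning gadget by hand (with $\right$/$\newrow$ self-loops instead of the paper's $(\right + \down)^*$ prefix) and apply the lemma only to $\alpha$, whereas the paper pushes the entire expression through the lemma syntactically and then observes $\alpha(\varphi) \equiv \beta(\rootcoord)$.
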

\begin{proof}
  Observe that $\alpha(\varphi)$ is equivalent to $\beta(\rootcoord)$
  where $\beta$ is $(\right + \down)^* \cdot [\varphi] \cdot
\alpha$. By Lemma~\ref{lem:ca-vs-navexpr}, there exist a CA $A$
defining $\beta$ of level $\level(\beta) - 1 = \max(\level(\varphi) +
1, \level(\alpha)) -1 = \max(\level(\varphi), \level(\alpha) - 1) \leq
\level(\alpha(\varphi)) - 1$. Now observe that, since $\alpha(\varphi)
\equiv \beta(\rootcoord)$, it immediately follows that $A$ defines
$\alpha(\varphi)$.  (Recall that a CA expresses a CA if it gives the
same result when evaluation starts from root coordinate (1,1).)
\end{proof}

\begin{lemma}
  \label{lem:ca-vs-cexpr}
  Every forward coordinate expression is definable by a CA of the same
  level.
\end{lemma}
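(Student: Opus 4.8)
The plan is to proceed by structural induction on the forward coordinate expression $\varphi$, constructing in each case a CA that selects $\sem{\varphi}_T$ when launched from the root configuration $(1,1,1,q_0)$, and checking that the maximum level of the coordinate expressions it mentions does not exceed $\level(\varphi)$. The base cases are immediate. For a token $a$ I would use a single \emph{traversal} state $q_0$ (initial) carrying the loops $(q_0,\right,q_0)$ and $(q_0,\newrow,q_0)$, so that from the root $q_0$ can reach every cell by following the event stream, together with a transition $(q_0,[a],q_f)$ into a fresh accepting state $q_f$; since the oracle test does not advance the event pointer, this CA accepts exactly at the cells whose content contains $a$, and the only coordinate expression occurring in it is $a$, so its level is $0=\level(a)$. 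The expression $\rootcoord$ is handled by a single state that is both initial and accepting with no transitions, and $\true$ by the traversal gadget with an accepting traversal state; both have level $0$.

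For the application case $\varphi=\alpha(\varphi_1)$ I would simply invoke Lemma~\ref{lem:ca-vs-cexpr-alpha-phi}, which already yields a CA of level at most $\level(\alpha(\varphi_1))-1\le\level(\varphi)$; this is where the genuine navigational work is pushed, and it is why the earlier lemmas are needed. The Boolean connectives are then layered on top. For disjunction $\varphi_1\lor\varphi_2$ I would take the CAs $A_1,A_2$ furnished by the induction hypothesis and apply the union construction from the proof of Lemma~\ref{lem:ca-vs-navexpr} (a fresh initial state with $\varepsilon$-transitions into the initial states of $A_1$ and $A_2$); a run is then a run of one of the two machines, so the selected set is $\sem{\varphi_1}_T\cup\sem{\varphi_2}_T$ and the level is $\max(\level(A_1),\level(A_2))\le\level(\varphi_1\lor\varphi_2)$. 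For conjunction $\varphi_1\land\varphi_2$ I would take $A_1$ from the induction hypothesis, declare its accepting states non-accepting, and redirect each of them through an oracle test $[\varphi_2]$ into a single fresh accepting state $q_f$; since the oracle test fires without moving the event pointer, the composite accepts at a coordinate $c$ exactly when $A_1$ selects $c$ and $\varphi_2$ holds at $c$, i.e.\ on $\sem{\varphi_1}_T\cap\sem{\varphi_2}_T$, and its oracles are those of $A_1$ together with $\varphi_2$, so its level stays $\le\max(\level(\varphi_1),\level(\varphi_2))$.

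The subtle case, which I expect to be the main obstacle, is negation, since complementing a nondeterministic CA is not available for free. Here the plan is to exploit the defining feature of the model, namely that a CA may test \emph{any} forward coordinate expression as an oracle: because $\lnot\varphi_1$ is itself a forward coordinate expression with $\level(\lnot\varphi_1)=\level(\varphi_1)$, I would reuse the traversal gadget and add the single test $(q_0,[\lnot\varphi_1],q_f)$. This accepts precisely at the coordinates where $\varphi_1$ fails, giving $\coords(T)\setminus\sem{\varphi_1}_T$, and it mentions no coordinate expression of level higher than $\level(\varphi_1)$. A purely automata-theoretic treatment would instead force a determinization and complementation of CAs, with their suspended states and coordinate register, whereas the oracle mechanism lets me sidestep this at the (harmless, for the present ``same level'' statement) cost of leaving an oracle of the same level. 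The remaining work is routine bookkeeping: verifying in each case that the constructed CA selects $\sem{\varphi}_T$ when started from $(1,1,1,q_0)$, and that the maximum level of its coordinate expressions does not exceed $\level(\varphi)$.
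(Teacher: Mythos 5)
Your proof is correct, but it takes a genuinely different route from the paper's, whose entire argument is a global reduction: the paper observes that $\varphi \equiv \beta(\rootcoord)$ for the navigational expression $\beta = (\right + \down)^* \cdot [\varphi]$, applies Lemma~\ref{lem:ca-vs-navexpr} to obtain a CA for $\beta$ of level $\level(\beta) - 1 = \level(\varphi)$, and is done. In other words, the paper plays the oracle card wholesale --- the entire expression $\varphi$ sits behind a universal traversal as a single oracle test --- which is exactly the move you reserve for negation alone. This is consistent with the paper's own remark preceding these lemmas that unrestricted oracles make CAs ``almost trivially'' as expressive as coordinate expressions, so the only real content of the statement is the level accounting, and for that the global trick suffices. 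Your structural induction instead does honest automaton constructions for the Boolean connectives (the union construction for $\lor$; rerouting accepting states through an oracle test $[\varphi_2]$ for $\land$, which is sound because oracle transitions fire only from active states and do not advance the event pointer), invokes Lemma~\ref{lem:ca-vs-cexpr-alpha-phi} for $\alpha(\varphi_1)$ just as the paper's chain of lemmas intends, and correctly identifies negation as the one place where a purely automata-theoretic induction stalls (nondeterministic CAs cannot be complemented cheaply) and a same-level oracle is unavoidable. What your version buys is a somewhat finer construction: oracles are pushed down to proper subexpressions everywhere except under negation, so that, e.g., for $\alpha(\psi) \land a$ your CA has level $\level(\alpha(\psi)) - 1$ where the paper's has level $\level(\alpha(\psi))$. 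But this extra precision is never exploited downstream, since Proposition~\ref{prop:ca-tm-logspace} accepts CAs of arbitrary level; for the purposes of Theorem~\ref{theo:weak-stream}, the paper's one-liner is all that is needed, at a fraction of the length.
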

\begin{proof}
  Note that every forward coordinate expression $\varphi$ is
  equivalent to $\beta(\rootcoord)$ where $\beta = (\right + \down)^*
  \cdot [\varphi])$. By Lemma~\ref{lem:ca-vs-navexpr}, there exists a
  CA $A$ defining $\beta$ of level $\level(\beta) - 1 =
  \level(\varphi)$.  Now observe that, since $\varphi \equiv
  \beta(\rootcoord)$, it immediately follows that $A$ defines
  $\varphi$.  (Recall that a CA expresses a CA if it gives the same
  result when evaluation starts from root coordinate (1,1).)
\end{proof}

{\bf Convention.} In what follows, by Turing Machine we understand a
Turing Machine with a read-only input tape, a read-write work tape,
and a write-only output tape where the cursor on the input tape can
only advance to the right (never go left). We say that Turing machine
$M$ \emph{implements} CA $A$ if, for every tokenized table $T$ of dimension
$n \times m$ and event stream $s = \sigma_1 \cdots \sigma_k$ of
$T$ it is the case that $M$ outputs on its output tape the coordinates
of the cells selected by $A$ on $\sigma$.

\begin{proposition}
  \label{prop:ca-tm-logspace}
  For every CA $A$ there exists a Turing Machine $M$ that implements
  $A$ and that uses at most $O(m \log(m) + \log(n))$ space on its work
  tape, where $m$ is the number of columns and $n$ is the number of
  rows in the input table event stream.
\end{proposition}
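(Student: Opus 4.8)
The plan is to implement $A$ by an on-the-fly subset (powerset) construction that makes a single left-to-right pass over the event stream, maintaining at all times the set of configurations $A$ could currently be in; the argument goes by induction on $\level(A)$. At each moment $M$ keeps three pieces of data on its work tape: the coordinate register $(k,\ell)$ of the current cell, the \emph{active set} $\mathit{Act}\subseteq Q$ of states the simulation is currently in, and the \emph{suspended set} $\mathit{Susp}\subseteq Q\times[m]$ of states $q\colon\ell'$ waiting for the next visit to column $\ell'$. The register takes $O(\log n+\log m)$ bits, $\mathit{Act}$ takes $O(1)$ bits (since $|Q|$ is a constant depending only on $A$), and $\mathit{Susp}$, being a subset of $Q\times[m]$, is stored either as $|Q|$ bit-vectors of length $m$ or as a list of at most $|Q|\cdot m$ pairs, using $O(m\log m)$ bits.

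A single simulation step will proceed as follows. On reading the event for cell $(k,\ell)$, $M$ first wakes up every suspended state $q\colon\ell\in\mathit{Susp}$ whose column matches the current one, moving the corresponding $q$ into $\mathit{Act}$. It then closes $\mathit{Act}$ under the \emph{non-advancing} transitions: $\varepsilon$- and $[\varphi]$-transitions keep the simulation at the same cell, while a $\down$-transition $(q,\down,q')$ adds the suspended state $q'\colon\ell$ to $\mathit{Susp}$. This closure is a standard finite reachability computation over the constant-size transition relation and hence terminates. If after closing some final state lies in $\mathit{Act}$, then $(k,\ell)$ is selected and $M$ writes it to the output tape. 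On the following $\right$ (resp.\ $\newrow$) event, $M$ advances the register to $(k,\ell+1)$ (resp.\ $(k+1,1)$) and updates $\mathit{Act}$ and $\mathit{Susp}$ by the suspended-right / suspended-new-row rules; this is precisely how the register-based semantics of Section~\ref{sec:coordinate-automaton} simulates downward navigation without ever moving left in the stream.

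The only non-structural transitions are the $[\varphi]$-transitions, which require, at $(k,\ell)$, the truth value of $(k,\ell)\in\sem{\varphi}_T$. Here forwardness is essential: since $\varphi$ is forward, whether $(k,\ell)$ is selected depends only on cells occurring no later than $(k,\ell)$ in table order, i.e.\ on the prefix of the stream read so far, so the value is available by the time $M$ reaches the cell. To produce these values within the space bound I use the level induction. If $\level(\varphi)=0$ then $\varphi$ is a Boolean combination of $a$, $\rootcoord$ and $\true$, which depends only on the token set and coordinate of the current cell and is evaluated directly. Otherwise $\varphi$ is decomposed along its Boolean structure into atoms of the form $\alpha(\psi)$; by Lemma~\ref{lem:ca-vs-cexpr-alpha-phi} each such atom is expressed by a CA of level at most $\level(\alpha(\psi))-1\le\level(\varphi)-1<\level(A)$, which by the induction hypothesis is implemented by a Turing machine within the stated bound. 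All these oracle machines are run \emph{in lockstep} with the main simulation as one product machine: the shared input cursor advances one event at a time and feeds the current event to every component, so that when the main simulation processes $(k,\ell)$ each oracle machine has already decided membership of $(k,\ell)$ and can report it.

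For the space bound, the main simulation uses $O(m\log m+\log n)$ as argued above. The recursion depth equals $\level(A)$ and the branching at each level is bounded by the number of distinct $\alpha(\psi)$-atoms occurring in $A$; both are constants depending only on $A$, so the product machine consists of a constant number of components, each using $O(m\log m+\log n)$ space by the induction hypothesis. Taking the disjoint union of their work tapes still uses $O(m\log m+\log n)$ space, which establishes the claim. I expect the main obstacle to be the space-bounded bookkeeping of $\mathit{Susp}$ together with the lockstep oracle evaluation; once one verifies that forwardness guarantees each oracle value is available exactly when the stream reaches the relevant cell, the remaining steps are routine.
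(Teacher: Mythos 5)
Your proof is correct and takes essentially the same route as the paper's: the same induction on $\level(A)$, the same single-pass simulation maintaining the coordinate register, the active state set, and the suspended set (with the identical $O(m\log m + \log n)$ accounting), and the same use of Lemma~\ref{lem:ca-vs-cexpr-alpha-phi} to compile each $\alpha(\psi)$-atom into a strictly lower-level CA whose Turing machine is run in lockstep on an auxiliary work tape and queried as an oracle when evaluating $[\varphi]$-transitions. The only difference is presentational: you make explicit the observation that forwardness guarantees each oracle value is determined by the prefix of the stream already read, which the paper leaves implicit in its construction.
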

\begin{proof}
  Let $A = (Q, q_0, F, \delta)$. Let $\Omega$ be the set of all
  oracles used in transitions of $A$, i.e., $\Omega = \{ \varphi \mid
  (q, [\varphi], q') \in \Delta, q,q'\in Q\}$. We construct $M$ by
  induction on the level $\lambda$ of $A$.

  \smallskip \noindent
  {\bf Base case. } If the level $\lambda$ of $A$ is $0$, then $M$
  operates as follows. It processes the event stream on its input tape
  from left to right, one event at a time, starting at the first
  event. During the processing it maintains on its work tape a tuple
  $(k,\ell,\act, \susp, \Phi)$, where $(k,\ell)$ is the coordinate
  corresponding to the current event being processed, $\act$ is the
  set of all active states that $A$ can be in any run of $A$ after
  having processed the events so far, $\susp$ is the set of all
  suspended states that $A$ can be in in any run after having
  processed the events so far, and $\Phi$ is the set of all oracles in
  $\Omega$ that select the current coordinate $(k,\ell)$. As such, $M$
  simulates all possible runs of $A$ on the input, similarly to how
  one normally simulates a non-deterministic finite state automaton by
  tracking all possible runs at once.

  In particular, $M$ starts with the tuple $(1,0, \{q_0\}, \emptyset,
  \emptyset)$ on its work tape before processing any event. When
  processing the next event, $M$ checks whether this is of the form
  $\event{\Gamma}$ or $\newrow$. If it is of the form $\event{\Gamma}$
  then it:
  \begin{compactenum}[1)]
  \item Increments $\ell$;
  \item Computes $\Phi$ for the new coordinate $(k,\ell)$. Note that this
    can be done using only the information in $\Gamma$ (the tokens of
    the current event) and the coordinate $(k,\ell)$. Indeed: since all
    oracles $\varphi \in \Omega$ are forward and level 0, they cannot
    contain subexpressions of the form $\langle \alpha \rangle$ or
    $\alpha(\psi)$. Therefore, each such $\varphi$ is a boolean
    combinations of literals, where each literal is either (1) a token
    $a$, (2) the constant $\true$, or (3) the root coordinate
    $\rootcoord$. Checking (2) is trivial, whereas checking (1)
    amounts to checking whether $a \in \Gamma$ and (3) amounts to
    comparing the coordinate $(k,\ell)$ with (1,1).
  \item It replaces $\act$ by $\{ q'\mid q \in \act, (q,\right,q') \in
  \delta\} \cup \{ q \mid q\colon \ell \in \susp\}$. It then adds to 
  $\act$ all states $p$ that can be reached from a state in this new $\act$ by
  traversing only $\epsilon$-transitions or $[\varphi]$-transitions,
  with $\varphi \in \Phi$.
\item Finally, it updates $\susp$ to \[ \{ q' \colon \ell \in Q \mid q \in
  \act, (q,\down,q') \in \Delta \}.\]
  \end{compactenum}
  If the current event is $\newrow$ then it :
  \begin{compactenum}[1)]
  \item increments $k$ and resets $\ell$ to $1$;
  \item moves to the next event on the input tape, which must be of
    the form $\event{\Gamma}$;\footnote{since we assume that all
      tables have at least one column.}
  \item Computes $\Phi$ for the new coordinate $(k,\ell)$.
  \item Replaces $\act$ by $\{q'\mid q \in \act, (q,\newrow, q') \in
    \delta\} \cup \{q' \mid q'\colon 1 \in \susp\}$. It then adds to
    $\act$ all states $p$ that can be reached from a state in this new
    $\act$ by traversing only $\epsilon$-transitions or
    $[\varphi]$-transitions, with $\varphi \in \Phi$.
\item Finally, it updates $\susp$ to \[ \{ q' \colon \ell \in Q \mid q \in
  \act, (q,\down,q') \in \Delta \}.\]
  \end{compactenum}
  
  After updating $(k,\ell,\act,susp,\Phi)$, $M$ checks if $\act \cap F
  \not = \emptyset$. If so, it outputs $(k,\ell)$.
  
  Observe that the space required by $M$ is:
  \begin{itemize}
  \item $\log(n) + \log(m)$ bits to store the coordinate $(k,\ell)$;
  \item at most $|Q|$ bits for storing $\act$;
\item at most $|Q| m \log(m)$ bits for storing $\susp$;
\item at most $|\Omega|$ bits for storing $\Phi$
  \end{itemize}
  Hence, since $|Q|$ and $|\Omega|$ are constant, $M$ runs in space
  $O(m \log(m) + \log(n))$, as desired.

  \smallskip \noindent
  {\bf Induction step.} If the level $\lambda$ of $A$ is $> 0$, then
  let $N$ be the set of all coordinate expressions of the form
  $\alpha(\varphi)$ that occur as a subexpression of some oracle in
  $\Omega$. By definition, each of these $\alpha(\varphi)$ is of level at
  most $\lambda$. By Lemma~\ref{lem:ca-vs-cexpr-alpha-phi}, there hence exists
  for each such $\alpha(\varphi)$ a CA $A_{\alpha(\varphi)}$ of level
  at most $\lambda - 1$ that expresses it. By induction hypothesis,
  there hence exists, for each $\alpha(\varphi) \in N$, a Turing
  Machine $M_{\alpha(\varphi)}$ that implements $\alpha(\varphi)$ in
  space $O(m \log(m) + log(n))$. 

  We then construct the Turing Machine $M$ for $A$ as follows. For
  ease of exposition, $M$ will have multiple (but a fixed number of)
  work tapes. Since each of these will use only $O(m \log(m) + \log(n))$ cells,
  it is standard to transform $M$ in a single-tape Turing Machine that
  runs in space $O(m \log(m) + \log(n))$.

  In particular, $M$ has a $|N| + 1$ work tapes: a principal work tape
  and an auxiliary work tape for each $\alpha(\varphi) \in N$. During
  processing, $M$ simulates $A$ on its principal work tape, and, in
  parallel, the Turing Machine for $M_{\alpha(\varphi)}$ on the
  auxiliary tape for $\alpha(\varphi)$. During the simulation of
  $M_{\alpha(\varphi)}$ we take care to never construct any output,
  but merely checks whether the current cell on the input tape should
  be output according to $M_{\alpha(\varphi)}$.

  The simulation of $A$ on its principal work tape happens in the
  exact same way as for the case where $\lambda = 0$. That is, we
  maintains a tuple $(k,\ell,\act,\susp, \Phi)$ for $A$ that simulates
  all possible runs of $A$. The only difference is that when we update
  this tuple in response to reading a new event from the input, we
  first update all the auxiliary work tapes, and then compute the set
  $\Phi \subseteq \Omega$ of all of $A$'s oracles that select the
  current cell $(k,\ell)$ as follows.

  Since each $\psi \in \Omega$ is forward, each $\psi$ is a boolean
  combination of \emph{literals}, where each literal is either (1) $a$
  with $a \in \tokens$; (2) $\true$; (3) $\rootcoord$; or (4)
  $\alpha(\varphi)$. Cases (1)--(3) can be checked as before. We can
  check whether current coordinate $(k,\ell)$ is selected by
  $\alpha(\varphi)$ simply by looking at the simulation of
  $M_{\alpha(\varphi)}$ on the work tape for $\alpha(\varphi)$ and
  verify whether $M_{\alpha(\varphi)}$ would output the current
  coordinate. As such, we can easily compute at any given instant
  whether the current coordinate is selected by $\psi$.

  As before, after the update of $(k,\ell,\act,\susp,\phi)$, $M$ checks
  whether $\act \cap F \neq \emptyset$ and, if so, writes $(k,\ell)$ on
  its output tape.

  Now note that, as before, $M$ uses $O(m \log(m) + \log(n))$ space on
  its principal work tape, and (by induction hypothesis) $O(m \log (m)
  + log(n))$ on each of its auxiliary work tapes. It hence uses $O(m
  \log(m) + log(n))$ space in total, as desired.
\end{proof}

From this, we derive Theorem~\ref{theo:weak-stream} as follows.

\begin{proof}[of Theorem~\ref{theo:weak-stream}]
  Let $R$ be a tabular schema, i.e. a set of rules of the form
  $\varphi \to e$ with $\varphi$ a coordinate expression and $e$ a
  content expression.

  By Lemma~\ref{lem:ca-vs-cexpr} every forward coordinate expression
  can be expressed by means of a CA which, by
  Proposition~\ref{prop:ca-tm-logspace}, can be evaluated by a Turing
  Machine in space $O(m \log(m) + \log(n))$, where $m$ is the number
  of columns in the input, and $n$ the number of rows.

  From this, we construct a Turing Machine $M$ that validates its
  input event stream w.r.t.\ $R$ as follows. $M$ has a fixed number of
  work tapes. In particular, for each coordinate expression $\varphi$
  serving as the left-hand side of a rule in $R$, $M$ has one tape on
  which it simulates the Turing Machine that evaluates
  $\varphi$. Here, $M$ prevents any output that may be generated by
  the Turing Machine for $\varphi$; but records when this machine
  would be doing so.

  All of these left-hand-sides are simulated in parallel upon reading
  the event stream. In addition, for each right-hand side $c$, $M$ has
  an auxiliary tape on which it simulates a finite state automaton for
  $c$. For each rule $\varphi \to c$, and in each position in the
  event stream of the form $\event{\Gamma}$, whenever it finds that
  $\varphi$ would select the current cell, $M$ simulates reading
  $\Gamma$ in the NFA for $c$: it proceeds from the current state set
  for $c$ according to all $\tau \in \Gamma$. 

  Under the row-based semantics, whenever we encounter a $\newrow$,
  each $c$ must be in a finite state, otherwise the input is invalid
  w.r.t.\ $R$. Whenever we see $\newrow$, we move each of the NFAs for
  $c$ back to their initial state.

  Under the region-base semantics, each $c$ has to be in a final state
  at the end of the input (and we never need to move $c$ back to its
  initial state upon $\newrow$).
\end{proof}

\makeatletter{}\section{Proof of Theorem~\ref{thm:forward-guarded-strong-streaming}}

\textsc{Theorem~\ref{thm:forward-guarded-strong-streaming}}. {\it 
  Guarded forward core-\chisel schemas are strongly streamable.}

\begin{proof}[sketch]
  We can use the algorithm of Theorem~\ref{theo:weak-stream} for weak
  streamability with the additional observation that, for guarded
  schemas, the columns that need to be remembered when going from one
  row to the next are independent of the width of the table $T$. More
  precisely, when going from one row to the next, we can store for
  each subexpression $\varphi$ of a region selection expression a set
  of pairs $P_\varphi$ of the form $(c,e)$ where $c$ is a column
  number and $e \in \{=,\geq\}$. The semantics is that, on the next
  row $r$, we need to continue the evaluation of this subexpression in
  the cells $\{(r,c) \mid {(c,=)} \in P_\varphi\} \cup \{(r,i) \mid
  {(c,\geq)} \in P_\varphi$ and $i \geq c\}$. The size of each such set
  $P_\varphi$ can be bounded by $2^{O(|\varphi|)}$, which is
  formalized in Lemma~\ref{lem:strong-streaming-columns}.
\end{proof}

For a coordinate expression $\varphi$, table $T$, and a row number $r$ we denote
by $\cells{\varphi}{T,r}$ the cells of $\sem{\varphi}_T$ in row
$r$. That is, $\cells{\varphi}{T,r} = \sem{\varphi}_T \cap \{(r,k)
\mid k \in \nat\}$.

In the next lemma, a \emph{right-open interval} (on a row $r$) is a set
of cells $S$ for which there exists a $k \in \nat$ such that $S =
\{(r,i) \mid i \geq k\}$.

\begin{lemma}\label{lem:strong-streaming-columns}
  \begin{enumerate}[(a)]
  \item If a coordinate expression $\varphi$ is row-guarded then, for
    each table $T$ and row coordinate $r$, the set $\cells{\varphi}{T,r}$
    consists of $2^{O(|\varphi|)}$ cells plus, optionally, a
    right-open interval on $r$.
  \item If a coordinate expression $\varphi$ is guarded then, for each
    table $T$ and row coordinate $r$, the set
    $\cells{\down^*(\varphi)}{T,r}$ consists of $2^{O(|\varphi|)}$
    cells plus, optionally, a right-open interval on $r$.
 \end{enumerate}
\end{lemma}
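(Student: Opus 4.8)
The plan is to prove both parts simultaneously by structural induction on $\varphi$, following the inductive definition of row-guardedness and guardedness. Call a subset of a single row \emph{simple} if it is the union of a bounded set of isolated cells and, optionally, one right-open interval on that row; this is precisely the shape the lemma demands. For part~(a) I carry the invariant $P(\varphi)$: for every $T$ and row $r$, the set $\cells{\varphi}{T,r}$ is simple, with at most $2^{O(|\varphi|)}$ isolated cells. The base cases are immediate: $\rootcoord$ and a token guarded by {\tt unique-per-row} contribute at most one cell per row, $\true$ contributes the interval $\{i \ge 1\}$, and $\right^*(\varphi)$ with $\varphi$ navigation-free collapses any non-empty row footprint to the single right-open interval starting at its leftmost column. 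The closure cases are routine once one observes that simple row-sets are closed under union (merge the two intervals, add the cells), intersection (the isolated part of the intersection is contained in the union of the two isolated parts, and two right-open intervals intersect to one), and the one-step shifts induced by $\varepsilon$, $\down$, and $\right$ (which map cells to cells and an interval to an interval). As each binary connective only adds the two cell bounds, the claimed $2^{O(|\varphi|)}$ bound follows.

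For part~(b) a purely per-row invariant is insufficient, and isolating the right strengthening is the heart of the argument. If I only knew that each $\cells{\varphi}{T,r}$ is simple, then $\cells{\down^*(\varphi)}{T,r}$, which collects every column used by $\varphi$ in rows $r' \le r$, could accumulate the boundedly-many-per-row isolated cells into $\Theta(r)$ distinct columns. I therefore strengthen the guarded invariant into a statement about columns over the \emph{whole} table. Concretely I carry $Q(\varphi)$: for every $T$ one can write $\sem{\varphi}_T = \mathcal{C} \cup \mathcal{I}$ where $\mathcal{I}$ contains at most one right-open interval per row and $\mathcal{C}$ is a set of cells all of whose columns lie in a single fixed set $K$ with $|K| \le 2^{O(|\varphi|)}$. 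Thus $Q$ bounds the number of columns occupied by isolated cells globally, not merely row by row; note that $Q(\varphi)$ immediately entails $P(\varphi)$, since at most $|K|$ isolated cells meet any one row.

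Granting $Q(\varphi)$, part~(b) is immediate: a cell $(r,c)$ lies in $\sem{\down^*(\varphi)}_T$ iff $(r',c) \in \sem{\varphi}_T$ for some $r' \le r$, so the isolated columns of $\cells{\down^*(\varphi)}{T,r}$ remain inside $K$ while the intervals of the rows $r' \le r$ union to the single right-open interval with least left endpoint, leaving $\cells{\down^*(\varphi)}{T,r}$ simple. It remains to check that $Q$ survives the guarded formation rules, and here the column-based formulation is exactly what is needed. The crux is conjunction: the isolated cells of $\sem{\varphi \land \psi}_T$ inject into (isolated cells of $\varphi$) $\cup$ (isolated cells of $\psi$), so their columns stay within $K_\varphi \cup K_\psi$ even though two intervals may intersect to a new interval, and the cross-row accumulation that blocked the per-row invariant simply cannot occur at the column level. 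The shifts fix or translate $K$ by one; union takes $K_\varphi \cup K_\psi$; the rule $\right^*(\varphi)$ that promotes a row-guarded expression to a guarded one yields $\mathcal{C} = \emptyset$ since every row collapses to a pure interval; and $\down^*$ preserves $Q$ because it leaves the isolated columns unchanged and sends per-row intervals to per-row intervals with downward non-increasing left endpoints. I expect pinning down the invariant $Q$ that simultaneously survives both $\land$ and $\down^*$ to be the main obstacle, everything else being bookkeeping; it is the absence of negation from the guarded grammar (complementing an interval would create an unbounded initial segment) that keeps the interval part of $Q$ and $P$ closed throughout.
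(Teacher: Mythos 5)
Your proposal is correct, and while part~(a) coincides with the paper's own proof (the same induction over the row-guardedness rules, using closure of ``boundedly many cells plus a right-open interval'' under union, intersection, and one-step shifts), for part~(b) you take a genuinely different---and in fact more careful---route. The paper proves (b) by inducting directly on the per-row structure of $\sem{\down^*(\varphi)}_T$, and in the conjunction case merely asserts it is ``analogous'' to disjunction with intersections of intervals in place of unions; this tacitly relies on the identity $\sem{\down^*(\psi_1\land\psi_2)}_T = \sem{\down^*(\psi_1)}_T \cap \sem{\down^*(\psi_2)}_T$, which is false in general, since down-closure does not distribute over intersection (take two \texttt{unique}-guarded tokens occurring in the same column but in different rows: the left-hand side is empty, the right-hand side is not), and a subset of a set of the required shape need not itself have that shape. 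Your strengthened invariant $Q$---that the isolated, non-interval part of $\sem{\varphi}_T$ occupies at most $2^{O(|\varphi|)}$ columns \emph{globally}, not just per row---is exactly what is needed to make this case airtight: the isolated cells of a conjunction inject into those of the conjuncts and hence stay inside $K_\varphi \cup K_\psi$, while $\down^*$ preserves $Q$ because it fixes columns and maps per-row intervals to per-row intervals. You also correctly diagnose why a purely per-row hypothesis cannot work for (b) (diagonal-style accumulation of $\Theta(r)$ occupied columns), which is precisely the danger the paper's terse treatment glosses over. What the paper's formulation buys is brevity; what yours buys is an argument that actually survives the conjunction and composite cases, so on this point the blind proposal is more rigorous than the published appendix.
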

\begin{proof}
  The lemma is proved by a straightforward induction on forward
  coordinate and navigational expressions. We provide it in full
  detail for the sake of completeness.

  For a coordinate expression $\varphi$ and a row $r$, we denote by
  $\numbercells{\varphi}{r}$ the number of coordinates of
  $\sem{\varphi}$ in row $r$, that is, the number of elements in
  $\{(r,k) \in \sem{\varphi} \mid k \in \nat\}$ if it is finite, and
  $\infty$ otherwise.

  \noindent (a) We prove this case by induction on the definition of
  guardedness in forward coordinate expressions.  The induction base cases for
  coordinate expressions $\varphi$ are $\varphi = a$, $\varphi =
  \textsf{root}$, $\varphi = \true$, and  $\varphi =
  \right^*(\psi)$.

  In the first case, if $\varphi = a$ is row-guarded then it only
  appears at most once per row by definition of the predicate {\tt
    unique-per-row}. Therefore, for each row $r$,
  $\numbercells{\varphi}{r} \leq 1$ and (a) is fulfilled.

  In the second case, if $\varphi = \textsf{root}$, then
  $\sem{\varphi}$ only contains a single cell. Again, for each row $r$,
  $\numbercells{\varphi}{r} \leq 1$ and (a) is fulfilled.

  In the third case, if $\varphi = \true$, then, for each row $r$,
  $\numbercells{\varphi}{r} = \infty$. Moreover,
  $\cells{\varphi}{r} = \{(r,k) \mid k \geq 1\}$, which is a
  right-open interval. Again, (a) is fulfilled.

  Fourth, when $\varphi = \right^*(\psi)$, we have that $\psi$ is a
  boolean combination of $\textsf{root}$, $\true$, and
  tokens. Therefore, it can be decided whether a cell $c$ is in
  $\sem{\psi}$ by looking at the predicates of $c$ and by testing
  whether it is cell $(1,1)$ or not. Such tests can be made by
  inspecting $c$ alone. For each row $r$, we either have that
  $\cells{\psi}{r}$ is empty or not.  If it is empty, then
  $\cells{\varphi}{r}$ is also empty and (a) follows. If it is
  non-empty, then $\cells{\varphi}{r} = \{(r,j) \mid j \geq i\}$, where
  $i$ is minimal such that $(r,i) \in \cells{\psi}{r}$. Since this is a
  right-open interval, (a) follows.

  For the inductive step, we consider the cases $\varphi = \psi_1 \lor
  \psi_2$, $\varphi = \psi_1 \land \psi_2$, $\varphi = \varepsilon(\psi)$, $\varphi = \down(\psi)$,
$\varphi = \right(\psi)$, $\varphi = \down(\alpha(\psi))$,
$\varphi = \right(\alpha(\psi))$, and $\varphi =
(\alpha+\beta)(\psi)$.

  In the first case, for each row $r$, we have
  that $\cells{\varphi}{r} = \cells{\psi_1}{r} \cup
  \cells{\psi_2}{r}$. Here (a) follows immediately from the inductive
  hypothesis and the observation that 
  the union of two right-open intervals is again a right-open interval.

  The second case is analogous to the first, but we observe that also
  the intersection of two right-open intervals is again a right-open
  interval.

  Third, when $\varphi = \varepsilon(\psi)$ case (a) follows
  immediately by induction and the observation that
  $\cells{\varepsilon(\psi)}{r} = \cells{\psi}{r}$ for each $r$.

  Cases four to seven are similar to case three. For example, when
  $\varphi = \down(\psi)$, then, for $r = 1$ we have
  $\cells{\varphi}{r} = \emptyset$ and for $r > 1$ we have
  $\cells{\varphi}{r} = \{(i,r) \mid (i,r-1) \in
  \cells{\psi}{r-1}\}$. Again (a) follows by induction.

  In the last case we have $\varphi = (\alpha+\beta)(\psi)$ and we
  already know that $\alpha(\psi)$ and $\beta(\psi)$ are
  row-guarded. Here, we again have that $\cells{\varphi}{r} =
  \cells{\alpha(\psi)}{r} \cup \cells{\beta(\psi)}{r}$. Here, (a)
  follows by induction in a similar way as the first inductive
  case. However, we need to take a little bit more care about the
  finite part of $\cells{\varphi}{r}$. Assume \mbox{w.l.o.g.} that
  $|\alpha| \geq |\beta|$ (if not, the roles of $\alpha$ and $\beta$
  can be interchanged in the following). Then, the finite part of
  $\cells{\varphi}{r}$ consists of $2^{O(|\alpha|+|\psi|)} +
  2^{O(|\beta|+|\psi|)}$ cells, which is at most $2 \cdot
  2^{O(|\alpha|+|\psi|)}$ cells and which, in turn, is bounded from
  above by $2^{O(|\varphi|})$ cells, which also proves (a) in this case.
  
  This concludes the proof of Lemma~\ref{lem:strong-streaming-columns}(a).
  
  \noindent (b) We again proceed by induction on the definition of
  guardedness in forward coordinate expressions. The induction base cases for
  coordinate expressions $\varphi$ are $\varphi = a$, $\varphi =
  \textsf{root}$, $\varphi = \true$, and $\varphi =
  \right^*(\psi)$.

 In the first case, if $\varphi = a$ is guarded then it only
  appears at most once in the table by definition of the predicate {\tt
    unique}. If $\sem{\varphi}$ is empty, then $\down^*(\varphi)$ is
  empty in which case (b) is fulfilled. If not, we have that
  $\sem{\varphi} = \{(i,j)\}$ for some $i$ and $j$. Here, we have that
  $\sem{\down^*(\varphi)} = \{(k,j) \mid k \geq i\}$, which fulfils
  the conditions of case (b).

  In the second case, if $\varphi = \textsf{root}$, then
  $\sem{\varphi}$ only contains a single cell. Here, we have that
  $\sem{\down^*(\varphi)} = \{(k,1) \mid k \geq 1\}$, which fulfils (b).

  In the third case, if $\varphi = \true$, then,
  $\sem{\down^*(\varphi)}$ contains every cell in the table. Clearly,
  this is a right-open interval for each row and (b) is fulfilled.

  In the fourth case, if $\varphi =  \right^*(\psi)$ we can decide for
each individual cell $c$ whether $c \in \sem{\varphi}$ by only
inspecting $c$, analogously as in (a). Also analogously, For each row
$r$, we either have that $\cells{\varphi}{r}$ is empty or a right-open
interval of the form $\cells{\varphi}{r} = \{(r,j) \mid j \geq i\}$, where
  $i$ is minimal such that $(r,i) \in \cells{\psi}{r}$. Therefore,
  $\cells{\down^*(\varphi)}{r}$ is also empty or a right-open interval
  on each row $r$.

  For the inductive step, we consider the cases $\varphi = \psi_1 \lor
  \psi_2$, $\varphi = \psi_1 \land \psi_2$, $\varphi = \varepsilon(\psi)$, $\varphi = \down(\psi)$,
  $\varphi = \right(\psi)$, $\varphi = \down(\alpha(\psi))$,
  $\varphi = \right(\alpha(\psi))$, $\varphi =
  (\alpha+\beta)(\psi)$, $\varphi = \down^* \cdot \alpha(\psi)$, and
  $\varphi = \right^*\cdot\alpha(\psi)$.
  
  In the first case, we have that $\down^*(\psi_1)$ and
  $\down^*(\psi_2)$ are guarded by induction, so they fulfil condition
  (b). Furthermore, we have that $\sem{\down^*(\psi_1 \lor \psi_2)} =
  \sem{\down^*(\psi_1)} \cup \sem{\down^*(\psi_2)}$. This means that,
  for each row $r$, we also have that $\cells{\varphi}{r} =
  \cells{\psi_1}{r} \cup \cells{\psi_2}{r}$. Here (b) follows
  immediately from the inductive hypothesis and the observation that the
  union of two right-open intervals is again a right-open interval.

  The second case is analogous to the first, but takes intersections
  of right-open intervals instead of unions.

  Case three, where $\varphi = \varepsilon(\psi)$, (b) follows
  immediately from the guardedness of $\psi$ and the observation that
  $\cells{\varepsilon(\psi)}{r} = \cells{\psi}{r}$ for each $r$.

  Cases four to eight are similar to case three. In case eight, where
  $\varphi = (\alpha+\beta)(\psi)$, we can bound the number of cells
  per row of $\sem{\down^*(\alpha+\beta)(\psi)}$ that are not in the
  right-open interval in exactly the same way as for the analogous
  case in (a).

  In case nine, we have that $\varphi = \down^* \cdot \alpha(\psi)$
  and $\alpha(\psi)$ is guarded. This case immediately follows from
  the induction hypothesis since $\sem{\down^* \varphi} = \sem{\down^*
    \down^* \alpha(\psi)} = \sem{\down^* \alpha(\psi)}$.

  Finally, in the last case, we have $\varphi =
\right^*\cdot\alpha(\psi)$, where $\alpha(\psi)$ is row-guarded. Due
to the row-guardedness of $\alpha(\psi)$, we know by (a) that
$\sem{\alpha(\psi)}$  for each row $r$ consists of $2^{O(|\varphi|)}$ cells in $r$ plus,
    optionally, a right-open interval on $r$. Therefore,
    $\right^*\cdot\alpha(\psi)$ on a row $r$ is either empty, or a
  right-open interval. From this, we immediately have that also
  $\down^*\right^*\alpha(\psi)$, on each row $r$, is either empty or a
right-open interval. This concludes the proof of Lemma~\ref{lem:strong-streaming-columns} (b).
\end{proof}

\end{document}